\newcommand*{\N}{\mathbb{N}}
\newcommand*{\M}{\mathcal{M}}
\newcommand*{\B}{\mathcal{B}}
\newcommand{\expspace}{\textsf{EXPSPACE}}
\newcommand{\pspace}{\textsf{PSPACE}}
\newcommand{\ltlf}{\mathsf{LTLf}}
\newcommand{\ltl}{\mathsf{LTL}}
\newcommand*{\size}[1]{|#1|}
\newcommand*{\A}{\mathcal{A}}
\newcommand*{\Final}{\mathcal{F}}
\newcommand*{\bigO}{\mathcal{O}}
\newcommand{\ltlfU}{\text{U}}
\newcommand{\ltlfW}{\text{W}}
\newcommand{\ltlfX}{\text{X}}
\newcommand{\ltlfR}{\text{R}}
\newcommand{\ltlfN}{\text{N}}
\newcommand{\ltlfNeg}{\neg}
\newcommand{\ltlfG}{\text{G}}
\newcommand{\ltlfF}{\text{F}}
\newcommand{\ltlG}{\text{G}_{\infty}}
\newcommand{\ltlF}{\text{F}_{\infty}}
\newcommand{\ltlU}{\text{U}_{\infty}}
\newcommand{\ltlftrue}{\textsf{true}}
\newcommand{\ltlffalse}{\textsf{false}}
\newcommand{\nnf}{\textsf{nnf}}
\newcommand{\alphabet}{\Sigma}
\newcommand{\states}{S}
\newcommand{\init}{\iota}
\newcommand{\lang}[1]{\mathcal{L}(#1)}
\newcommand{\ap}{\mathsf{Prop}}
\newcommand{\pref}{\mathsf{pref}}
\newcommand{\setnocond}[1]{\{#1\}}
	\newcommand{\ly}[1]{\todo[inline,color=orange!10,caption={LY}]{\textbf{LY:} #1}}
	\newcommand{\lmt}[1]{\todo[inline,color=red!10,caption={LMT}]{\textbf{LMT:} #1}}
	\newcommand{\myv}[1]{\todo[inline,color=green!10,caption={MYV}]{\textbf{MYV:} #1}}
	\newcommand{\aw}[1]{\todo[inline,color=yellow!10,caption={AW}]{\textbf{AW:} #1}}
	\renewcommand{\sb}[1]{}
	\newcommand{\ly}[1]{}
	\newcommand{\lmt}[1]{}
	\newcommand{\myv}[1]{}
	\newcommand{\aw}[1]{}
\tikzset{elliptic state/.style={draw,ellipse}}
\begin{document}

\title{Model Checking Strategies from Synthesis Over Finite Traces}
%
%

 \author{Suguman Bansal\inst{1}\orcidID{0000-0002-0405-073X} \and  Yong Li\inst{2}\orcidID{0000-0002-7301-9234} \and Lucas M. Tabajara\inst{3}\orcidID{0000-0001-9608-1404}, Moshe Y. Vardi\inst{4}\orcidID{0000-0002-0661-5773} \and Andrew Wells\inst{4}\orcidID{0000-0001-7780-2122}\thanks{Work was performed while the author was at Rice University}}
 \authorrunning{Bansal et. al.}
 \institute{Georgia Institute of Technology, GA, USA
 \and
 University of Liverpool, UK
 \and
 Runtime Verification, USA
 \and
 Rice University, TX, USA
 }

\maketitle              
\begin{abstract}

The innovations in reactive synthesis from {\em Linear Temporal Logics over finite traces} ($\ltlf$) will be amplified by the ability to  verify the correctness of the strategies generated by $\ltlf$ synthesis tools. This motivates our work on {\em $\ltlf$ model checking}.
$\ltlf$ model checking, however, is not  straightforward. The strategies generated by $\ltlf$ synthesis may be represented using  {\em terminating} transducers or {\em non-terminating} transducers where executions are of finite-but-unbounded length or infinite length, respectively.
For synthesis, there is no evidence that one type of transducer is better than the other since they both demonstrate the same complexity and similar algorithms. 


In this work, we show that for model checking, the two types of transducers are fundamentally different. Our central result is that $\ltlf$ model checking of non-terminating transducers is \emph{exponentially harder} than that of terminating transducers.
We show that the problems are \linebreak\expspace-complete and $\pspace$-complete, respectively.  Hence, considering  the feasibility of verification,  $\ltlf$ synthesis tools should synthesize terminating transducers. This is, to the best of our knowledge, the \emph{first} evidence to use one transducer over the other in $\ltlf$ synthesis.

\end{abstract}

\sloppy
\section{Introduction}
\label{Sec:Intro}

{\em Linear Temporal Logic over finite traces}~\cite{de2013linear} ($\ltlf$) is the finite-horizon counterpart of the well-known Linear Temporal Logic ($\ltl$) over infinite traces~\cite{pnueli1977temporal}.
$\ltlf$ is rapidly gaining popularity among real-world applications where behaviors are better expressed over a finite but unbounded horizon~\cite{brafman2019planning,de2019foundations,de2018automata,he2017reactive,wells2020ltlf}. 

Reactive synthesis from $\ltlf$ specifications, or $\ltlf$ synthesis ~\cite{bansal2020hybrid,camacho2019ltl,de2022ltlf,de2015synthesis,de2016ltlf,favorito2023forward,tabajara2019partition,zhu2017ltlfsymbolic} has amassed so much interest that the 2023 Reactive Synthesis Competition (SYNTCOMP) will inaugrate an $\ltlf$ track\footnote{http://www.syntcomp.org/news/}. Consequently,
$\ltlf$ synthesis tools have been  growing in complexity~\cite{bansal2020hybrid,de2021compositional,favorito2023forward,tabajara2019partition,zhu2017ltlfsymbolic}. Their correctness, however, is rarely verified. To continue the innovations in synthesis  and to successfully conduct large-scale competitions like SYNTCOMP there is, therefore, a need to verify the correctness of the synthesized strategies/transducers. Verifying the results as opposed to verifying the tools has been advocated in various contexts, including translation validation~\cite{siegel1998translation}, program checking~\cite{blum1995designing}, and equivalence checking~\cite{kuehlmann2002combinational}. 
For $\ltl$ synthesis, result checking is simply {\em $\ltl$ model checking}.
For $\ltlf$ synthesis, we need {\em $\ltlf$ model checking}. But this is a topic that has \emph{not} been studied so far, hence this work.

We observe that $\ltlf$ model checking for $\ltlf$ synthesis tools is \emph{not} as straightforward as one might have thought to be. The standard approach in the literature on $\ltlf$ synthesis generates {\em non-terminating transducers}.
This includes the seminal work on synthesis \cite{de2015synthesis} and the SYNTCOMP guidelines~\cite{jacobs2023temporal}.
The executions of non-terminating transducers are of infinite length. 
Since $\ltlf$ formulas are defined on finite traces only, an execution of a non-terminating transducer is said to satisfy an $\ltlf$ formula if there {\em exists} a finite-length prefix that satisfies the formula~\cite{de2015synthesis}.
Few works on  synthesis do mention the possibility of {\em terminating transducers} as the output \cite{bansal2020hybrid,zhu2017ltlfsymbolic}. Since their executions are of finite length, $\ltlf$ satisfaction is defined naturally on terminating transducers.
When it comes to synthesis, there is no clear evidence that one type of transducer is better than the other, since the complexity and algorithms of synthesis are the same for both types.
We believe this is why existing works on $\ltlf$ synthesis do \emph{not} make a clear distinction between the two. For implementations, however, most works use non-terminating transducers as they directly correspond to standard Mealy/Moore machines (See state-of-the-art tools, e.g., Syft~\cite{zhu2017ltlfsymbolic}, Lisa~\cite{bansal2020hybrid}, and Lydia~\cite{de2021compositional}).
This work shows, however, that from the \emph{model-checking} perspective, the two types of transducers are \emph{fundamentally different} and bear a significant impact on  synthesis.

Our central result is that $\ltlf$ model checking of non-terminating transducers is {\em exponentially harder} than $\ltlf$ model checking of terminating transducers. 
We demonstrate that under $\ltlf$ specifications,  model checking non-terminating transducers is $\expspace$-complete, whereas model checking terminating transducers is $\pspace$-complete.
An immediate implication of this result is that for non-terminating transducers, $\ltlf$  model checking is exponentially harder than $\ltl$ model checking, which is  known to be $\pspace$-complete~\cite{vardi1986automata}.  
This result is unexpected because a factor behind the increasing popularity of $\ltlf$ is the perception that problems using $\ltlf$ are at most as hard as those using $\ltl$, if not simpler (See~\autoref{tab:complexity}). This is because $\ltlf$ formulas can be expressed by automata over finite words~\cite{de2013linear}, which allow for practically scalable algorithms for automata constructions~\cite{TRV12}. Conversely, $\ltl$ formulas require automata over infinite words~\cite{wolper1983reasoning}, for which the automata manipulation is harder in  theory~\cite{esparza2020unified,safra1988complexity,thomas2002automata,vardi2007buchi} and in practice~\cite{duret2016spot,kvretinsky2018owl}.
It is no wonder that an exponential increase in the model-checking complexity seems surprising at first.

The exponential blow-up in $\ltlf$ model-checking of non-terminating transducers arises from subtlety in the problem definition. 
A transducer satisfies a formula if there are no counterexamples. 
In non-terminating transducers, an infinite execution is a counterexample if {\em every} finite prefix does not satisfy the $\ltlf$ formula.
Formally, for an $\ltlf$ formula $\phi$, let $\pref(\phi)$ represent the language consisting of all  infinite executions for which every prefix satisfies $\phi$. Then, a non-terminating transducer $\M$ satisfies an $\ltlf$ formula $\phi$ iff $\lang{\M} \cap \pref(\neg\phi) = \emptyset$, where $\lang{\M}$ is the set of all executions of $\M$.
This is where $\ltlf$ model checking fundamentally differs from $\ltl$ model checking, as counterexamples in $\ltl$ are obtained simply from an automaton for the negation of the  formula~\cite{vardi1986automata}. 
W.l.o.g., we show that while $\pref(\phi)$ is $\omega$-regular for all $\ltlf$ formulas $\phi$, the size of their non-deterministic B\"uchi automata (NBA) is {doubly exponential} in the size of the formula, i.e., $2^{2^{\bigO(|\phi|)}}$ and $2^{2^{\Omega(\sqrt{|\phi|})}}$. Once again, this differs from $\ltl$ model checking, where the size of the NBAs for counterexamples is singly exponential in the size of the formula.  
As a result, we show $\ltlf$ model checking  of non-terminating transducers is in $\expspace$ using on-the-fly emptiness checking of  $\lang{\M}\cap \pref(\neg\phi)$.
We establish $\expspace$-hardness from  first principles.

In contrast, we show that $\ltlf$ model checking of terminating transducers is \pspace-complete.  Due to their finite-length executions, counterexamples in terminating transducers are  completely characterized by the negation of the  formula, lending the same complexity as $\ltl$ model checking.

Thus, our results offer a clear recommendation between  the two types of transducers in $\ltlf $ synthesis. 
We argue that synthesis tools should account for the feasibility of the verification of the synthesized transducers. Consequently, we recommend that synthesis tools should generate terminating transducers rather than non-terminating transducers. We believe this is the \emph{first} work to offer \emph{theoretical} evidence to use one transducer over the other in synthesis. Furthermore, these results could be applied immediately to run the $\ltlf$ track in SYNTCOMP. 


\begin{table}[t]
\centering
\caption{$\ltl$ vs. $\ltlf$: Complexity w.r.t. specification.  NT and T abbreviate non-terminating and terminating models, respectively.
}
\begin{tabular}{c | c | c }
\hline
 & $\ltl$    & $\ltlf$   \\ \hline

Non-deterministic Automata & (NBA) Exponential & (NFA) Exponential \\ 
Satisfiability      & ~${\pspace}$-complete~\cite{sistla1985complexity} ~ & ${\pspace}$-complete~\cite{de2013linear}  \\ 
Synthesis & {$\textsf{2EXPTIME}$}-complete~\cite{pnueli1989synthesis} & ~~{$\textsf{2EXPTIME}$}-complete~\cite{de2015synthesis}  \\ 
Model Checking (NT) ~& ${\pspace}$-complete ~\cite{vardi1986automata} & {\color{blue} ${\expspace}$-complete (New!)}  \\ 
Model Checking (T) ~& Undefined & {\color{blue} ${\pspace}$-complete (New!)} \\ 
\hline
\end{tabular}
\label{tab:complexity}
\end{table}

\subsubsection*{Outline.}
Section~\ref{sec:prelims} outlines preliminaries on $\ltlf$ and $\ltlf$ synthesis. Section~\ref{Sec:probstatement} motivates and defines $\ltlf$
model checking. Section~\ref{sec:prefix-automata} is dedicated to $\pref(\phi)$. Section~\ref{sec:complexity} develops the  complexity of model checking. Lastly, Section~\ref{sec:discussion} concludes. 

\section{Preliminaries and Notations}
\label{sec:prelims}


We use the standard notions of deterministic and non-deterministic finite automata (DFAs and NFAs, respectively) as well as deterministic and non-deterministic B\"uchi automata (DBAs and NBAs, respectively). For an automaton, we use the notation $\A = (\alphabet, S , \init, \delta, F )$ 
where $\alphabet$ is a finite set of symbols (called an alphabet),
$S$ is a finite set of states,
$\init \in \states$ is the initial state,
$F \subseteq S$ is the set of accepting states, and
$\delta \subseteq S \times \Sigma \times S $ is the transition relation. We use standard semantics for all automata, hence defer details to the appendix. 

\subsection{Linear Temporal Logic over Finite Traces ($\ltlf$)}
\label{sec:ltlf-def}

$\ltlf$~\cite{baier2006planning,de2013linear} extends propositional logic with finite-horizon temporal operators. In effect, $\ltlf$  is a variant of $\ltl$~\cite{pnueli1977temporal} that is interpreted over finite rather than infinite traces.
The syntax of an $\ltlf$ formula over a finite set of propositions $\ap$ is identical to $\ltl$, and defined as 
$$\varphi := \mathsf{true} \mid \mathsf{false} \mid a \in \ap \mid \neg \varphi \mid \varphi_1 \land \varphi_2 \mid \ltlfX \varphi \mid \varphi_1 \ltlfU \varphi_2 $$
where $\ltlfX$ (Next) and $\ltlfU$ (Until),   are temporal operators. We also include their dual operators, $\ltlfN$ (Weak Next) and $\ltlfR$ (Release), defined as $\ltlfN \varphi 
 \equiv \neg\ltlfX \neg\varphi$ and $\varphi_1\ltlfR\varphi_2 \equiv \neg(\neg\varphi_1 \ltlfU \neg\varphi_2)$.
 We also use typical abbreviations such as $\ltlfF \varphi \equiv \mathsf{true}\ltlfU \varphi$,  $\ltlfG \varphi \equiv  \mathsf{false} \ltlfR \varphi$, $\varphi_1\vee \varphi_2 = \neg(\neg \varphi_1 \land \neg\varphi_2)$, $\varphi_1 \rightarrow \varphi_2 \equiv \neg \varphi_1 \lor \varphi_2$.
 We denote by $\size{\phi}$ the length/size of a formula $\phi$, i.e., the number of operators in $\phi$.


The semantics of $\ltlf$ is similar to $\ltl$ but is interpreted over finite traces.
A finite sequence $\rho$ over $2^{\ap}$ is said to satisfy an $\ltlf$ formula $\phi$ over $\ap$, denoted by $\rho\models \phi$, if $\rho, 0 \models \phi$ where for all positions $0 \leq i < \size{\rho}$, $\rho, i\models \phi$ is defined inductively  on $\phi$ as follows:

\begin{itemize}
    \item $\rho, i \models \ltlftrue$; 
    $\rho, i \not\models \ltlffalse$;
    $\rho, i \models a$ iff $a \in \rho_i$
    
    
    \item $\rho, i \models \neg\varphi$ iff $\rho, i \not\models \varphi$
    
    \item $\rho, i \models \phi_{1} \land \phi_{2}$ iff $\rho, i \models \phi_{1}$ and $\rho, i \models \phi_{2}$;
    
    
    \item $\rho, i \models \ltlfX \phi$ iff $i + 1 < \size{\rho}$ and $\rho, i + 1 \models \phi$
    
    \item $\rho, i \models \phi_{1} \ltlfU \phi_{2}$ iff there exists $j$ s.t. $i \leq j < \size{\rho}$ and $\rho, j \models \phi_{2}$, and for all $k$, $i \leq k < j$, we have $\rho, k \models \phi_{1}$
    
\end{itemize}

Observe that $\ltlfX$ requires that there \emph{exists} a next position;
In the context of \emph{finite} traces, its negation also contains the situation that no next position exists, formulated as $\neg (\ltlfX \ltlftrue)$ or equivalently $\ltlfN \ltlffalse$. This differs from $\ltl$ where the Next operator is applied to all positions. 
Also, note that $\ltlf$ formulas are evaluated on traces of non-zero length.  
The language of an $\ltlf$ formula $ \phi $ over $\ap$ is the set of all finite sequences $\rho$  over $2^{\ap}$ such that $\rho \models \phi$.
The language of an $\ltlf$ formula is regular.
The NFA and DFA representing $\ltlf$ are of size singly exponential and doubly exponential, respectively, in the size of the formula~\cite{de2013linear}.
We note that a letter $\sigma \in \Sigma$ of the NFA/DFA corresponds to a valuation over the set $\ap$ of propositions.


\subsection{ $\ltlf$ Synthesis and Transducers}
\label{sec:ltlfsynthesis}

Let  $\ltlf$ formula $\phi$ be defined over propositional variables partitioned into $\mathcal{I}$ and $\mathcal{O}$ representing the input and output variables, respectively. 
Given such an $\ltlf$ formula  $\phi$, the problem of {\em $\ltlf$ realizability} is to determine whether there exists a strategy $f : (2^{\mathcal{I}})^* \rightarrow 2^{\mathcal{O}}$ such that for all $\lambda_\mathcal{I} = I_0, I_1, \cdots \in (2^{\mathcal{I}})^{\omega}$, there is an integer $k \geq 0$ such that the finite trace $\rho =(I_0 \cup f(\varepsilon)),(I_1 \cup f(I_0)), \cdots ,(I_k \cup f(I_0, I_1, \cdots , I_{k-1}))$ satisfies $\phi$. 
The {\em $\ltlf$ synthesis problem} is to generate such a function, if the given formula is realizable~\cite{de2015synthesis}. 
Intuitively, $\ltlf$ synthesis can be viewed as a game between two agents, an environment and a system, who continually take turns to assign values to the input and  output variables, respectively, to generate a sequence of input and output variables. W.l.o.g., we assume the system plays first, followed by the environment, and so on. The goal of synthesis is to generate a strategy for the system agent so that all resulting plays with the environment satisfy the given specification.
We note that our model-checking results also hold when the environment plays first, as we will model strategies as transition systems in model checking for generality (cf. Section~\ref{Sec:probstatement}).

\subsubsection{Non-terminating transducers.}
The standard in $\ltlf$ synthesis is to represent the strategy $f$ using (non-terminating) transducers~\cite{de2015synthesis,jacobs2023temporal}. 
W.l.o.g., a transducer is a {\em Moore machine} $\M = { (Q,q_{0},\mathcal{I} ,\mathcal{O},\delta 
,G)}$ where $Q$ is a finite set of states, $q_0\in Q$ is the initial state, and $\mathcal{I}$ and $\mathcal{O}$ are finite sets of input and output variables, respectively. 
Functions $\delta: Q\times 2^\mathcal{I} \rightarrow Q$ and  $G : Q\rightarrow 2^\mathcal{O}$ are the {\em transition function} and the {\em output function}, respectively. 
Given an input sequence $\lambda_\mathcal{I} = I_0, I_1, \cdots \in (2^{\mathcal{I}})^{\omega}$, the output sequence is  $\lambda_\mathcal{O} = G(q_0),G(q_1),\dots \in (2^{\mathcal{O}})^{\omega}$  where $q_0$ is the initial state and $q_{i+1} = \delta(q_i, I_i)$ for all $i\geq 0$. 

Then, given an $\ltlf$ formula with variables partitioned into $\mathcal{I}$ and $\mathcal{O}$ the realizability and synthesis problem is to generate a Moore machine $\M$ such that for all  input sequences $\lambda = I_0, I_1, \cdots \in (2^{\mathcal{I}})^{\omega}$, there exists an integer $k \geq 0$ such that $\rho = (I_0, G(q_0)), (I_1, G(q_1)) \dots (I_k, G( q_k))$ satisfies $\phi$.
Intuitively, the system and environment play indefinitely, where the system plays as per the transducer. The play (an execution in the transducer) satisfies an $\ltlf$ formula if there exists a finite-length prefix that satisfies the formula. 

\subsubsection{Terminating transducers.}
The strategy $f$ can also be represented using terminating transducers~\cite{bansal2020hybrid,zhu2017ltlfsymbolic}.
W.l.o.g., a terminating transducer is a {\em Terminating Moore machine} $\M = { (Q,q_{0},\mathcal{I} ,\mathcal{O},\delta 
,G, F)}$ where $Q$, $q_0$,  $\mathcal{I}$, $\mathcal{O}$, $\delta$, and $G$ are as defined for Moore machines and $\emptyset \neq F \subseteq Q$ are the {\em terminal states}. 
An input sequence $\lambda_\mathcal{I} = I_0, I_1, \cdots I_k \in (2^{\mathcal{I}})^{*}$ generates an output sequence   $\lambda_\mathcal{O} = G(q_0),G(q_1),\dots G(q_k) \in (2^{\mathcal{O}})^{*}$  where $q_0$ is the initial state and $q_{i+1} = \delta(q_i, I_i)$ for all $0 \leq i < k$.

Then, given an $\ltlf$ formula with variables partitioned into $\mathcal{I}$ and $\mathcal{O}$, the realizability and synthesis problem is to generate a
terminating Moore machine $\M$ such that for all  input sequence $\lambda = I_0, I_1, \cdots \in (2^{\mathcal{I}})^{\omega}$, there exists an integer $k \geq 0$ such that $\rho = (I_0, G(q_0)), (I_1, G(q_1)) \dots (I_k, G( q_k))$ with $q_{k+1} = \delta(q_k, I_k) \in F$ and $\rho$ satisfies $\phi$.
Intuitively, the synthesized terminating transducer is such that as soon as a play lands in a terminal state of the transducer, the system agent controlling the output variables wins the game and this play is over as it is guaranteed that the play seen so far satisfies the given formula. On the contrary, in non-terminating transducers, the system agent does not have the ability to terminate a game
as it is never informed of whether it has seen a satisfying prefix.


\section{$\ltlf$ Model Checking}
\label{Sec:probstatement}

In addition to being of independent interest, our motivation behind  $\ltlf$ model checking is to support the ongoing development of $\ltlf$ synthesis tools. As synthesis tools continue to become more complex, it is imperative that we design automatic approaches to  check their correctness. One way is to evaluate whether the result generated from these tools is correct. In the case of $\ltlf$ synthesis, result checking corresponds to $\ltlf$ model checking. Finally, an immediate application of $\ltlf$ model checking could be in running the inaugural $\ltlf$ track in the Reactive Synthesis Competition (SYNTCOMP)~\cite{jacobs2023temporal}.

We begin by defining the model-checking problem. As described in Section~\ref{sec:ltlfsynthesis}, the result of $\ltlf$ synthesis could be a terminating or a non-terminating transducer. Since $\ltlf$ satisfaction on executions in the two types of transducers differ, we define model-checking on them separately. For the sake of generality, we define model-checking with respect to {\em transition systems} (TS) as opposed to transducers. 
Translations from transducers to transition systems are standard and polynomial~\cite{DBLP:conf/litp/NicolaV90}.
Hence, the translation details have been omitted.

\paragraph{Non-Terminating Transition Systems}
\label{Sec:Def:Nonterminating}
are those that run indefinitely, i.e., their executions are of infinite length (e.g. network servers).
Formally, a non-terminating TS is a structure  $\M = (\Sigma, S, T, \init, L)$, where $\Sigma$ is a finite propositional alphabet, $S$ is a finite set of states, relation $T \subseteq S \times S$ is the transition relation with no sink states, $\init$ is the initial state, and $L: S\rightarrow 2^{\Sigma}$ is the {\em labeling function}. An {\em execution} $\rho = s_0s_1\cdots$  in $\M$ is an infinite sequence of consecutive states beginning with the initial state, i.e., $s_0 =\init$ and $(s_i,s_{i+1}) \in T$ for all $i\geq 0$. 
The {\em label sequence} of $\rho$ is the sequence $L(\rho) = L(s_0)L(s_1)\cdots$. 
The $n$-length finite prefix of $\rho$ and its label sequence are given by $\rho[0,n] = s_0\cdots s_{n-1}$ and $L(\rho[0,n]) = L(s_0)\cdots L(s_{n-1})$, respectively, for $n>0$. 

Since executions are of infinite-length and $\ltlf$ formulas are interpreted over finite-length sequences only, we say an {\em execution $\rho$ in $\M$ satisfies an $\ltlf$ formula} $\phi$, denoted by $\rho \models \M$, as follows $$\rho \models \phi \text{ iff } \exists n> 0 \text{ s.t. } L(\rho[0,n]) \models \phi,$$
i.e., there exists a finite-length prefix of the execution that satisfies the formula. 

\paragraph{Terminating Transition Systems}
\label{Sec:Def:Terminating}
are those that terminate after a finite but unbounded amount of steps (e.g. a terminating program). Formally, a  terminating TS is given by a structure $\M = (\Sigma, S, T, \init, L, F)$, where $\Sigma$, $S$,  $T \subseteq S \times S$, $\init$, and $L: S\rightarrow 2^{\Sigma}$ are defined as for nonterminating transition systems and $\emptyset \neq F \subseteq S$ are the {\em terminal states}, which are the only states that are allowed to be sink states. 
An {\em execution} $\rho = s_0\cdots s_n$  in $\M$ is a finite sequence of consecutive states beginning with the initial state and ending in a terminal state, i.e., $s_0 =\init$ and $(s_i,s_{i+1}) \in T$ for all $0\leq i < n $, and $s_n \in F$. Its {\em label sequence} is the sequence $L(\rho)=L(s_0)\cdots L(s_n)$. 

An {\em execution $\rho$ in $\M$ satisfies an $\ltlf$ formula} $\phi$, denoted by $\rho \models \phi$, 
$$\rho\models \phi \text{ iff } L(\rho) \models \phi.$$

\paragraph{Model Checking.}
We first define \emph{satisfaction} and then \emph{model checking}.

\begin{definition}[$\M\models\phi$]
\label{def:nonterminating}
Given a non-terminating (resp., terminating) transition system $\M$ and an $\ltlf$ formula $\phi$, we say TS {\em $\M$ satisfies $\phi$}, denoted by $\M \models \phi$, if for all (resp., finite) executions $\rho$ of $\M$, we have that $\rho \models \phi$.
\end{definition}

\begin{definition}[Model Checking]
\label{def:terminating}
Given a non-terminating (resp. terminating) transition system $\M$ and an  $\ltlf$ formula  $\varphi$,  the problem of $\ltlf$ model checking of non-terminating (resp. terminating) models is to determine whether  $\M$ satisfies  $\varphi$. 
\end{definition}

\paragraph{Note on abuse of notation.}
The notation $\models$ has been overloaded to express satisfaction at several occasions, namely, in $\ltlf$ semantics, in defining when executions of non-terminating and terminating systems satisfy a formula, and when a system satisfies a formula. We overload notation to avoid new symbols for each case, as the context is clear from the L.H.S.

\section{Prefix Language of $\ltlf$ Formulas}
\label{sec:prefix-automata}

This section builds the basic blocks for $\ltlf$ model checking of non-terminating systems.
Recall from Section~\ref{Sec:probstatement}, an (infinite-length) execution in a non-terminating  system  $\M$ violates an $\ltlf$ formula $\phi$ if {\em all}  of its finite prefixes violate $\phi$.
So,  the counterexamples are captured by the language that accepts an infinite word iff all of its finite prefixes violate $\phi$ (or satisfy $\neg \phi$).
We call this the \emph{prefix language} of an $\ltlf$ formula $\neg \phi$.
Then, clearly,  $\M \models \phi$ iff the intersection of $\M$ with the prefix language of $\neg\phi$ is empty, making the prefix language a basic block to model-check non-terminating systems.

We first observe that the prefix languages for $\ltlf$ formulas are $\omega$-regular.
We then show that one can construct a DBA accepting the prefix language of an $\ltlf$ formula, which incurs a doubly exponential blow-up (Section~\ref{ssec:prefix-construction}).
One may expect that the complexity of the construction can be improved if we target at NBAs.
We show, however, that the doubly exponential blow-up is \emph{not} due to a lack of better construction, but a fundamental trait of the problem itself (Theorem~\ref{thrm:prefixdouble}).
This is in contrast to the construction of NBA/NFA for $\ltl$/ $\ltlf$, where only deterministic automata constructions incur doubly exponential blow-ups and nondeterministic automata constructions incur singly exponential blow-ups, hinting at the hardness of model checking. Finally, we identify a fragment of $\ltlf$ formulas for which a singly exponential construction of NBAs for their prefix languages can be obtained via a translation from $\ltlf$ to $\ltl$ (Section~\ref{ssec:prefix-fragment}).

\subsection{Prefix Automata for $\ltlf$}
\label{ssec:prefix-construction}

This section formally defines the prefix language/automata for $\ltlf$ formulas and proves that their automata constructions involve an unavoidable double-exponential blow-up. 
The upper and lower bounds are shown in Theorem~\ref{thrm:PropertiesPref} and Theorem~\ref{thrm:prefixdouble}, respectively.  

\begin{definition}[Prefix Language]
\label{def:prefixlangauge}
Given an $\ltlf$ formula $\phi$,  the {\em prefix language of $\phi$}, denoted by $\pref(\phi)$, is such that an (infinite-length) word $w \in \pref(\phi)$ iff every finite prefix of $w$ satisfies $\phi$, i.e., 
$\forall n>0. w[0,n] \models \phi$.
\end{definition}
Recall that the semantics of $\ltlf$  requires traces of non-zero length only (see Section~\ref{sec:prelims}).
So we only need $n > 0$, instead of $n \geq 0$, ignoring the empty word.
By abuse of notation, we let $\pref(\phi)$ denote both the prefix language and its corresponding automaton, called the \emph{prefix automaton}.
 
 We start by showing $\pref(\phi)$ is $\omega$-regular for $\ltlf$ formula $\phi$:

\begin{theorem}[\rm\bf Prefix automata: Upper bound]
\label{thrm:PropertiesPref} 
For an $\ltlf$ formula $\phi$,  the language $\pref(\phi)$ is $\omega$-regular.     
The B\"uchi automaton recognizing $\pref(\phi)$ has $2^{2^{\mathcal{O}(|\phi|)}}$ states. 
\end{theorem}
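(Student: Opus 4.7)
The plan is to show $\omega$-regularity of $\pref(\phi)$ by constructing a deterministic B\"uchi automaton (DBA) that recognizes it, reusing the standard doubly exponential DFA construction for $\ltlf$ formulas referenced in Section~\ref{sec:ltlf-def}. The key observation is that $\pref(\phi)$ is a \emph{safety} language over infinite words: membership depends on every nonempty finite prefix lying in $\lang{\phi}$, so once a prefix leaves $\lang{\phi}$ the infinite word is irrevocably excluded. Safety languages are precisely those recognized by B\"uchi automata all of whose reachable states are accepting, so a DBA tracking the DFA run and aborting on the first non-accepting prefix will do.

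First, I would build a DFA $D_\phi = (\Sigma, S, \iota, \delta, F)$ with $\lang{D_\phi}$ equal to the language of $\phi$ and $|S| = 2^{2^{\mathcal{O}(|\phi|)}}$; such a construction is already cited in Section~\ref{sec:ltlf-def}. Next, I would convert $D_\phi$ into a B\"uchi automaton $B_\phi$ as follows: take state set $S \cup \{\bot\}$ with a fresh sink $\bot$, keep initial state $\iota$, set $\delta_B(q,a) = \delta(q,a)$ when $\delta(q,a) \in F$ and $\delta_B(q,a) = \bot$ otherwise, make $\bot$ absorbing, and choose $S$ itself as the B\"uchi accepting set. Then an infinite run of $B_\phi$ on $w$ visits accepting states infinitely often iff it never reaches $\bot$, iff the DFA run on every nonempty prefix $w[0,n]$ ends in $F$, iff every nonempty prefix of $w$ satisfies $\phi$. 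This is exactly $\pref(\phi)$.

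The size of $B_\phi$ is $|S| + 1 = 2^{2^{\mathcal{O}(|\phi|)}}$, matching the bound in the statement, and $\omega$-regularity follows immediately from the existence of a B\"uchi recognizer. The construction is conceptually straightforward; the one subtle point I would make explicit in the writeup is the $n > 0$ quantifier in Definition~\ref{def:prefixlangauge}, which requires that accepting behaviour be enforced from step $1$ onwards --- this is handled automatically because the very first transition of $B_\phi$ out of $\iota$ must already land in $F$ or else divert to $\bot$. I do not anticipate a genuine technical obstacle on this upper-bound side; the substantive difficulty in this section lies in the matching lower bound of Theorem~\ref{thrm:prefixdouble}, which must show that the doubly exponential blow-up is unavoidable even for nondeterministic B\"uchi automata and therefore is not an artifact of the safety/determinization trick used here.
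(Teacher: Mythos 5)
Your proposal is correct and follows essentially the same approach as the paper: start from the doubly exponential DFA for an $\ltlf$ formula and turn it into a safety DBA that irrevocably rejects as soon as some finite prefix fails $\phi$. The only (cosmetic) difference is that you build the automaton directly from the DFA for $\phi$ with a rejecting sink $\bot$, whereas the paper starts from the DFA for $\neg\phi$, makes its accepting states into accepting sinks, and then complements by swapping accepting and non-accepting states --- your variant even sidesteps the small argument the paper needs about the shortest prefix satisfying $\neg\phi$ surviving the transition modification.
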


 \begin{proof}  
Given $\ltlf$ formula $\phi$, we construct a DBA for $\pref(\phi)$ as follows:
\begin{enumerate}
    \item Construct a DFA $D = (\alphabet, Q, \init, \delta, F)$ for $\neg\phi$, i.e.,  $\mathcal{L}(D) = \mathcal{L}(\neg\phi)$.
    
    We require $D$ to be \emph{complete} in the sense that for every state $s$ and every alphabet $a \in \alphabet$, there exists a successor $t = \delta(s, a)$.
    
    \item Obtain a DBA $C = (\alphabet, Q, \init, \delta', F)$ by converting all accepting states $F$ of $D$ to accepting sink states in $C$. For this, replace all outgoing transitions from all accepting states in $D$ with self loops on all letters.
    
    Formally, replace every $\delta(f,a) = t$ in DFA $D$ with $f = \delta'(f, a)$ in DBA $C$, for all $f\in F$ and $a \in \alphabet$. For all other states, let $\delta'$ behaves identically to $\delta$. 
    
    \item Obtain the desired B\"uchi automaton $B = (\alphabet, Q, \init, \delta',  \Final = Q\setminus F)$ by swapping accepting and non-accepting states of $C$.
\end{enumerate}
Since $C$ is a DBA with accepting sink states, $C$ is the complement of $B$. Hence, it suffices to show that $C$ accepts $w \in \Sigma^\omega$ iff there exists a finite prefix of $w$ that satisfies $\neg\phi$. 
Clearly,  $w \in \lang{C}$ then $w$ must have a finite-prefix satisfying $\neg\phi$ since the accepting states of $C$ and $D$ are identical.
Conversely, we need to show that despite $\delta$ and $\delta'$ being different, $C$ will accept all words that contain a finite prefix satisfying $\neg\phi$. For this, we show that for every such word, $C$ retains the transitions to accept the shortest prefix satisfying $\neg\phi$. Details can be found in the appendix. 
Finally, the number of states of $C$ are bounded by those of $D$ which is doubly exponential in $|\phi|$~\cite{de2013linear}.
\qed
\end{proof}

Observe that the B\"uchi automaton $B$ constructed above is \emph{deterministic}. One of our key discoveries is that the doubly exponential blow-up appears even in the construction of NBAs for $\pref(\phi)$, demonstrating that the blow-up is fundamentally unavoidable.
Theorem~\ref{thrm:prefixdouble} presents such an $\ltlf$ formula to demonstrate the blow-up. The rest of the section builds up to that construction.

We observe that the blow-up is caused by the combination of two aspects: First is the universal quantification on prefixes of words in $\pref(\phi)$; Second is the ability of an $\ltlf$ formula to identify the $k$-th last positions of finite words using the $\ltlfX$ (Next) modality. At first, we identify an $\omega$-regular language, parameterized with $n \geq 1$, such that all NBAs accepting the language have at least $2^{2^{n}}$ states.
Let $n \in \N$ and $\Sigma = \{0,1,\#, \&\}$. 
Consider the language $L_n \subseteq \Sigma^ \omega$ where
\[u\cdot \& \cdot v \in L_n \text{ s.t. if }   \#w\#  \text{ appears in } v \text{ then } \#w\# \text{ also appears in } u,\]
where $w\in\{0,1\}^n$, $u \in \{0,1,\#\}^*$ and $v \in \{0,1,\#\}^\omega$.
Intuitively, $L_n$ consists of infinite words that are (a) split into two parts by a special character ``$\&$" and (b) all words of the form $\# w\#$ appearing after ``$\&$" must have appeared before ``$\&$", for all $n$-length words $w \in \{0,1\}^n$.
Essentially, $L_n$ is  a bit-level adaption of the language $K_d$ where $x \cdot \& \cdot y \in K_d$ if digits appearing in $y$ are a subset of digits appearing in $x$, where $x \in D^*$ and  $y \in D^\omega$ for $D = \{0,1,\cdots, d-1\}$.
Obviously, the words $14\&1$ and $134\&4$ are good prefixes of a word $x\cdot \& \cdot y \in K_d$ when $d > 5$.
There are also less obvious good prefixes, such as a permutation of $D$ followed by the letter $\&$.
We need to recognize all good prefixes in order to accept the language $K_d$.
So, it is necessary to keep track of the digits (i.e., subsets of $D$) that the automaton has seen so far in an input word.
Hence, the NBA of $K_d$ needs $2^{\Omega(d)}$ states.
The same proof can be adapted to show that the NBA of $L_n$ consists of $2^{2^{\Omega(n)}}$ states. 
We defer a full proof to the supplemental material. 

Next, we need to identify a regular language $F_n$ such that, by abuse of notation, $\pref(F_n)$  corresponds to $L_n$ and $F_n$ can be represented by an $\ltlf$ formula of polynomial length in the parameter $n> 0$. 
A natural choice would be to let $F_n$ to be the finite-word version of $L_n$. In other words, $u\cdot \& \cdot v \in F_n$ s.t. if $\#w\#$ appears in $v$ then $\#w\#$ must have appeared in $u$ for all $w \in \{0,1\}^n$ and $u,v \in \{0,1,\#\}^*$. The issue is that $F_n$ cannot be represented by a short $\ltlf$ formula for the same reason why  $L_n$ cannot be expressed by a short $\ltl$ formula. 

We need $F_n$ to be  a {\em simpler} language. The roadmap would be to leverage the universal quantification over all prefixes to generate $L_n$. This is also where we leverage the ability of $\ltlf$ to refer to the last $k$-th positions of a finite trace. Keeping these goalposts, we define regular language $F_n\subseteq\Sigma^*$ as 
\begin{align*}
  u\cdot \& \cdot v \in F_n \text{ s.t. } &\text{if the last }n+2  \text{ characters of } v \text{ are of the form }  \#w\# \\ 
  & \text{ then } \#w\# \text{ also appears in } u,  
\end{align*}
where $w \in \{0,1\}^n$ and $u,v\in \{0,1,\#\}^*$.
Intuitively, by applying universal quantification on all finite-length prefixes, focusing on the last $n+2$ characters of words in $F_n$ is sufficient to ensure that every occurrence of the form $\#w\#$ after the symbol ``$\&$" appears in the portion before  the ``$\&$". 

There is one last caveat.
There are infinitely many prefixes of words in $L_n$ that may not contain the symbol $\&$. This issue can be easily remedied by including words without symbol $\&$ to both languages. We overload the notation of $\pref(L)$ to refer to the prefix language of a language over finite words $L$. Then,

\begin{lemma}
\label{lemma:prefixlangauge}
Let $L_n$ and $F_n$ be as defined above. Then $$L_n \uplus \{0,1,\#\}^\omega = \pref(F_n \uplus \{0,1,\#\}^*).$$
\end{lemma}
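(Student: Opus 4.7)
The plan is to prove the two inclusions of the set equality separately. In both directions, the words without the separator $\&$ are handled trivially: they always lie in $\{0,1,\#\}^\omega$ on the left and all of their finite prefixes lie in $\{0,1,\#\}^*$ on the right. So the real content is about words of the form $u \cdot \& \cdot v$ with $u \in \{0,1,\#\}^*$ and $v$ either in $\{0,1,\#\}^\omega$ or $\{0,1,\#\}^*$.

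For the inclusion $\subseteq$, I take $\alpha = u\cdot \& \cdot v \in L_n$ and verify that every finite prefix lies in $F_n \uplus \{0,1,\#\}^*$. If the prefix does not reach the $\&$, it is in $\{0,1,\#\}^*$ and we are done. Otherwise the prefix has the form $u \cdot \& \cdot v'$ where $v'$ is a finite prefix of $v$; to conclude membership in $F_n$, I need to check that whenever the last $n+2$ characters of $v'$ form a pattern $\#w\#$ with $w \in \{0,1\}^n$, the block $\#w\#$ also appears in $u$. But any such suffix of $v'$ is in particular an occurrence of $\#w\#$ inside $v$, so the defining property of $L_n$ applied to this $w$ yields the required occurrence in $u$.

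For the inclusion $\supseteq$, I take $\alpha \in \pref(F_n \uplus \{0,1,\#\}^*)$ and distinguish whether $\alpha$ ever contains $\&$. If it does not, then $\alpha \in \{0,1,\#\}^\omega$ and we are done. Otherwise let $k$ be the position of the (unique) $\&$, so $\alpha = u \cdot \& \cdot \beta$ with $u \in \{0,1,\#\}^*$ and $\beta \in \{0,1,\#\}^\omega$, and I must verify the $L_n$ condition that every $\#w\#$ appearing in $\beta$ (for $w \in \{0,1\}^n$) also appears in $u$. Given such an occurrence of $\#w\#$ located at positions $q, q+1,\dots,q+n+1$ of $\beta$, I consider the finite prefix of $\alpha$ of length $|u|+1+q+n+2$; this prefix has the form $u \cdot \& \cdot v'$ where $v'$ ends exactly with $\#w\#$. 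By hypothesis this prefix lies in $F_n \uplus \{0,1,\#\}^*$, and because it contains $\&$ it must lie in $F_n$, so the definition of $F_n$ forces $\#w\#$ to appear in $u$, as required.

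The argument is essentially a bookkeeping exercise, and no step is genuinely hard; the one place where a little care is needed is the $\supseteq$ direction, where one must pick for each occurrence of $\#w\#$ in $\beta$ the \emph{specific} finite prefix that makes this $\#w\#$ the trailing $n+2$ characters, so that the universal quantification over prefixes coming from $\pref(\cdot)$ can be matched with the local ``last $n+2$ characters'' definition of $F_n$. This is precisely the design choice behind defining $F_n$ via its suffix rather than via an internal occurrence, and it is what makes the prefix-closure recover the full $L_n$ constraint.
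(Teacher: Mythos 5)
Your proof is correct and follows essentially the same route as the paper's: both directions are handled by the same case analysis, with the key step in each being to match an occurrence of $\#w\#$ in the infinite tail with the specific finite prefix that ends exactly at that occurrence. The only cosmetic difference is that the paper explicitly notes up front that a word in $\pref(F_n \uplus \{0,1,\#\}^*)$ can contain at most one ``$\&$'', which you assert parenthetically; this follows immediately from the definition of $F_n$, so nothing is missing.
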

\begin{proof}[Proof Sketch]
To see why $L_n \uplus \{0,1,\#\}^\omega \subseteq \pref(F_n \uplus \{0,1,\#\}^*)$, observe that the prefixes of a word  $w \in  L_n \uplus \{0,1,\#\}^\omega$ either contain the symbol $\&$ or they don't. If the prefix falls under the latter, then the prefix is contained in $\{0,1,\#\}^*$. Otherwise, if the last $n+2$ characters are not in the form $\#w\# $ for $w \in \{0,1\}^n$ then the prefix is contained in $F_n$ by definition of $F_n$. If the last $n+2$ characters are in form $\#w\# $ for $w \in \{0,1\}^n$, then, by properties of words in $L_n$, $\#w\# $ must have appeared before $\&$. Once again, the prefix is contained in $F_n$. Thus, all prefixes of $w$ are contained in  $F_n \uplus \{0,1,\#\}^*$. 

The converse, i.e., $\pref(F_n \uplus \{0,1,\#\}^*) \subseteq L_n \uplus \{0,1,\#\}^\omega $, can be proven by a similar case-by-case analysis.
Details can be found in the appendix. \qed
\end{proof}

The last piece is to show that the language 
$F_n \uplus \{0,1,\#\}^*$ can be expressed using an $\ltlf$ formula $\phi_n$ of length polynomial in $n$, as shown below:

\begin{theorem}[\rm\bf Prefix automata: Lower bound]
\label{thrm:prefixdouble}
There exists an $\ltlf$ formula $\psi$ such that the number of states in all NBAs for $\pref(\psi)$ is $2^{2^{\Omega(\sqrt{|\psi|}\ \hspace{-0.4mm})}}$. 

\end{theorem}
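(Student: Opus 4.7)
The plan is to construct, for each $n \geq 1$, an $\ltlf$ formula $\psi_n$ of size $O(n^2)$ whose prefix language equals $L_n \uplus \{0,1,\#\}^\omega$, and then transfer the doubly exponential NBA lower bound for $L_n$ to $\pref(\psi_n)$. Since $|\psi_n| = O(n^2)$ gives $n = \Omega(\sqrt{|\psi_n|})$, the bound $2^{2^{\Omega(n)}}$ rewrites as $2^{2^{\Omega(\sqrt{|\psi_n|})}}$, yielding the theorem with $\psi := \psi_n$.

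The NBA lower bound on $\pref(\psi_n)$ follows from the lower bound on $L_n$ claimed in the discussion preceding the theorem. Since the set of infinite words containing $\&$ is recognized by a constant-size NBA and $L_n$ is the intersection of $\pref(\psi_n)$ with this set, any NBA for $\pref(\psi_n)$ can be converted into one for $L_n$ with at most a constant-factor blowup; hence the $2^{2^{\Omega(n)}}$ lower bound for $L_n$ transfers to $\pref(\psi_n)$.

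The main technical step is designing $\psi_n$ so that its language is $F_n \uplus \{0,1,\#\}^*$, at which point Lemma~\ref{lemma:prefixlangauge} delivers the prefix language we want. Two observations drive the design. First, in $\ltlf$ the $(k+1)$-st position from the end of a finite trace is pinpointed by $\ltlfX^k(\ltlfN \ltlffalse)$, since $\ltlfN \ltlffalse$ holds exactly at the final position; this provides a length-$O(k)$ ``end-relative pointer''. Second, the required bit-by-bit match between the last $n+2$ characters $\# b_1 \cdots b_n \#$ and a candidate occurrence starting at some earlier position $q$ can be expressed by $n$ biconditionals of the form $(\ltlfX^i\,0) \leftrightarrow \ltlfF(0 \wedge \ltlfX^{n+1-i}(\ltlfN \ltlffalse))$, each of size $O(n)$, for a total of $O(n^2)$. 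Structurally I will take $\psi_n := \ltlfG \neg\& \,\vee\, (\alpha \rightarrow \gamma)$, where $\alpha$ asserts via end-relative pointers that the last $n+2$ characters form $\#\{0,1\}^n\#$, and $\gamma$ uses $\ltlfF$ to existentially choose a position $q$ at which (a) $\ltlfF \&$ still holds (so $q$ lies before the first $\&$), (b) the symbols at $q$ and $q+n+1$ are $\#$, (c) the $n$ intermediate positions carry bits, and (d) the $n$ biconditionals above match these bits with the last $n$ bits of the trace. The $\ltlfG \neg\&$ disjunct handles words without $\&$ vacuously, which is exactly what Lemma~\ref{lemma:prefixlangauge} requires.

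The hard part will be engineering $\psi_n$ so that the biconditionals genuinely compare bits to bits (handled by auxiliary type-constraint conjuncts at both the matching position and the trace's tail), the match position $q$ is forced to sit strictly before $\&$ and to have $n+1$ valid successors in the trace, and the total size remains quadratic. The crucial point is avoiding the naive encoding that would enumerate the $2^n$ possible bit strings $w \in \{0,1\}^n$, which would blow the formula up to exponential size; the entire purpose of the bit-by-bit biconditional trick is to prevent this enumeration, and the resulting quadratic-size formula is precisely what produces the $\sqrt{|\psi|}$ rather than $|\psi|$ in the second exponent of the final bound.
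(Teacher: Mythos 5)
Your proposal is correct and follows essentially the same route as the paper: reduce to the doubly exponential NBA lower bound for $L_n$, use Lemma~\ref{lemma:prefixlangauge} to reduce to expressing $F_n \uplus \{0,1,\#\}^*$ by an $\ltlf$ formula of size $O(n^2)$, and achieve that size by anchoring to the end of the trace with $\ltlfX^k(\neg\ltlfX\,\ltlftrue)$ and comparing the candidate occurrence to the trailing $\#w\#$ bit by bit rather than enumerating all $2^n$ words. Your bitwise biconditionals differ only cosmetically from the paper's $(\ltlfX^i 0 \land \ltlfG(\mathsf{Ends} \rightarrow \ltlfX^i 0)) \vee (\ltlfX^i 1 \land \ltlfG(\mathsf{Ends} \rightarrow \ltlfX^i 1))$ clauses, and your explicit constant-size-product argument for transferring the $L_n$ lower bound to $L_n \uplus \{0,1,\#\}^\omega$ is, if anything, slightly more detailed than the paper's.
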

\begin{proof}

Let $n \in \N\setminus \{0\}$ and $\Sigma = \{0,1,\#, \&\}$. Let $L_n$ and $F_n$ be as defined above. 

Since all NBAs of $L_n$ are of size $2^{2^{\Omega{(n)}}}$ and $L_n$ is disjoint from $\{0,1,\#\}^\omega$ by containing the ``\&" symbol, it is easy to show that all NBAs of $L_n \uplus \{0,1,\#\}^\omega$ require  $2^{2^{\Omega{(n)}}}$  states as well. 

From Lemma~\ref{lemma:prefixlangauge}, it is sufficient to show that $F_n\uplus \{0,1,\#\}^*$ can be represented by an $\ltlf$ formula of length $\bigO(n^2)$. 
So, let us construct the desired $\ltlf$ formula $\phi_n$.
By abuse of notation, let the propositions be given by $\ap = \{0,1,\#,\&\}$ with the interpretation that the symbol holds when its proposition is true.
Recall that a letter $\sigma$ in the finite alphabet $\Sigma$ corresponds to a valuation over the atomic propositions $\ap$.
For instance, $\& \in \alphabet$ is interpreted as the valuation $\neg 0 \land \neg 1 \land \neg \# \land \&$ over $\ap$. 
Then, the $\ltlf$ formula $\phi_n$
 is a conjunction of the following three:

\begin{enumerate}
    \item[(R1).] At all times, only one proposition can be true.
    \item[(R2).] If ``$\&$" holds at some place, it occurs exactly once.
    \item[(R3).] If ``$\&$" holds at some  place, then if the end of the word has the form $\#w\#$, for $w \in \{0,1\}^n$, $\#w\#$ must have appeared before ``$\&$".
\end{enumerate}
The $\ltlf$ formulation of (R1), denoted by $\mathsf{OnlyOneProp}$, is quite straightforward and has been deferred to the supplementary material. 
The formulation of (R2) is $\ltlfF \& \rightarrow \mathsf{ExactOne\&}$, where  $\mathsf{ExactOne\&}$ expresses that $``\&"$ occurs exactly once: 
$$\mathsf{ExactOne\&}:=(\neg \& \ltlfU (\&   \wedge (\neg(\ltlfX \ltlftrue) \vee \ltlfX (\ltlfG \neg \&))) ).$$
Intuitively, the $``\&"$ symbol is not seen \emph{until} it is seen somewhere, after which either the trace  \emph{terminates} (i.e., $\neg(\ltlfX \ltlftrue)$ holds) or the trace does not see $``\&"$ \emph{globally} (i.e., $\ltlfX (\ltlfG \neg \&)$ holds).
In fact, we also have $\neg(\ltlfX \ltlftrue) \vee \ltlfX (\ltlfG \neg \&) \equiv \ltlfN (\ltlfG \neg \&)$.

To express (R3), we first introduce two formulas. The first is  $\mathsf{EndWith\#w\#}$ to express that the end of the word has the form $\#w\#$. The second is $\mathsf{End\#w\#AppearsBefore\&}$ to express that the word $\#w\#$ must appear before $``\&"$. So, (R3) is expressed by 
$$\ltlfF \&  \rightarrow (\mathsf{EndWith\#w\#} \rightarrow \mathsf{End\#w\#AppearsBefore\&})$$
For $\mathsf{EndWith\#w\#}$, we introduce shorthands, namely  $\mathsf{Ends} := {\ltlfX}^{n+1}(\neg(\ltlfX \ltlftrue))$, and $\mathsf{Appear\#w\#}:= \#  \land {\ltlfX}^{n+1}\# \land \bigwedge_{i=1}^{n} {\ltlfX}^i (0 \lor 1)$.
Note that $\mathsf{Ends}$ is true only at the $(n+2)$-th last position  of a trace and $\mathsf{Appear\#w\#}$ enforces that the current and next $n+1$ positions have the form $\#w\#$ for $w \in \{0,1\}^n$.
Then, $$\mathsf{EndWith\#w\#}:= \ltlfG(\mathsf{Ends} \rightarrow \mathsf{Appear\#w\#})$$

Also, 
$\mathsf{End\#w\#AppearsBefore\&}:=
$ \begin{align*}  \ltlfF\Big(\mathsf{Appear\#w\#} \land \ltlfF\&  \land \bigwedge_{i=1}^n[(\ltlfX^i 0 \land \ltlfG(\mathsf{Ends} \rightarrow \ltlfX^i 0)) \vee (\ltlfX^i 1 \land \ltlfG(\mathsf{Ends} \rightarrow \ltlfX^i 1))]
\Big)
\end{align*}
Intuitively, when defining $\mathsf{End\#w\#AppearsBefore\&}$, we assume that we are standing at the first position of a word of the form $\#w\#$ that appears before $``\&"$.
So, we require that $\mathsf{Appear\#w\#}$ holds and later $\ltlfF \&$ holds.
Next, we require the same word $w$ to appear at the end.
So we require that if in the $i$-th position,  $0$ (resp. $1$) holds, at the $i$-th position from where $\mathsf{Ends}$ holds,  $0 $ (resp. $1$) must also hold. 
This is formulated as $(\ltlfX^i 0 \land \ltlfG(\mathsf{Ends} \rightarrow \ltlfX^i 0)) \vee (\ltlfX^i 1 \land \ltlfG(\mathsf{Ends} \rightarrow \ltlfX^i 1))$.

Finally, the whole formula $\phi_n$ is given as follows:
\begin{align*}
    \phi_n  & = \mathsf{OnlyOneProp} \\
        & \land (\ltlfF\& \rightarrow (\mathsf{ExactOne\&} \land  ((\mathsf{EndWith\#w\#} \rightarrow \mathsf{End\#w\#AppearsBefore\&}))))
\end{align*}

Clearly, when $\ltlfF \&$ does not hold, all words satisfying $\phi_n$ would be in $\{0,1,\#\}^{\omega}$.
If $\ltlfF \&$ holds, then all words should meet (R2) and (R3).
One can easily verify that $\phi_n$ specifies the language $F_{n} \uplus \{0,1,\# \}^*$. Thus, $\pref(\phi_n) = L_n \uplus \{0,1,\# \}^\omega$.

Last but not the least, the length of $\phi_n$ is in $\bigO(n^2)$ since $\mathsf{End\#w\#AppearsBefore\&}$ has length of $\bigO(n^2)$. 
\qed
\end{proof}

Note that the $\ltlf$ formulation makes heavy use of $\mathsf{Ends}$, which in turn uses the $\ltlfX$ modality. Essentially, $\mathsf{Ends}$ serves as a unique identifier of a specific position at the end of all traces. This enables us to anchor at that location without any artificial constructs and to express the desiderata accordingly. This is a crucial difference between  $\ltlf$ and $\ltl$.

\subsection{Prefix automata for $\ltlf$ Fragment}
\label{ssec:prefix-fragment}
In this section, we show that a singly exponential construction of NBAs is possible for a fragment of $\ltlf$ formulas.
Through an exposition of the prefix language for fragments of $\ltlf$, we highlight some of the peculiarities of the prefix language.
Consider the fragment of $\ltlf$, denoted as $\ltlf_{\setminus \setnocond{\ltlfR, \lor}}$, 
 which permits all but the $\ltlfR$ (Release) modality and allows $\neg$ and $\vee$ on literals only, as defined below:
$$\psi := \ell \mid \neg \ell \mid  \psi \land \psi \mid \ltlfX \psi  \mid \ltlfN \psi \mid  \ltlfF \psi \mid \ltlfG \psi \mid \psi \ltlfU \psi$$ where $\ell := a \in \ap \mid \neg a \mid \ell \land \ell \mid \ell \lor \ell$. 
We show that the prefix language of this fragment is equivalently represented by an $\ltl$ formula of the same size, hence its NBA is singly exponential in the size of the formula. The said $\ltl$ formula can be obtained using the translation
$t: \ltlf_{\setminus \setnocond{\ltlfR, \lor}} \rightarrow \ltl$  described below (Since $\ltl$ and $\ltlf $ share  the same syntax, to  avoid confusion, we add the subscript $\infty$ to temporal operators in $\ltl$, indicating that we have $|\rho| = \infty$. For instance, Globally in $\ltl$ becomes $\ltlfG_\infty$):


\begin{multicols}{2}
    \begin{itemize}
        \item $t(\ell) = \ell$, $t(\neg \ell) = \neg \ell$ 
        \item $t(\ltlfX \psi) = \ltlffalse$, $t(\ltlfN \psi) = \ltlfX_{\infty} t(\psi)$
        \item $t(\psi_1 \land \psi_2) = t(\psi_1) \land t(\psi_2)$
        \item $t(\ltlfF \psi) = t(\psi) $
        \item $t(\psi_1 \ltlfU \psi_2) = t(\psi_2)$
        \item $t(\ltlfG \psi) = \ltlG (t(\psi))$ 
    \end{itemize}
\end{multicols}

The insight behind this translation is to identify that the criteria for a formula to hold on all finite-length prefixes simplifies to the formula holding on a prefix of length one. The proof is presented below:

\begin{lemma}
    \label{lem:ltlf-f-g}
    Let $\phi \in \ltlf_{\setminus \setnocond{\ltlfR, \lor}}$ and let $\ltl$ $t(\phi)$ be  as defined above.
    Then, $\lang{t(\phi)} = \pref(\phi)$ and $\mathcal{O}(|\phi|) = \mathcal{O}(|t(\phi)|)$.
\end{lemma}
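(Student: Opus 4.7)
The plan is to prove by structural induction on $\phi \in \ltlf_{\setminus \setnocond{\ltlfR, \lor}}$ the conjunction of two claims: (a) $\lang{t(\phi)} = \pref(\phi)$, and (b) $|t(\phi)| = \mathcal{O}(|\phi|)$. Claim (b) follows immediately because every clause of the definition of $t$ adds at most one operator on top of the inductively-translated subformulas, so I would dispatch it in a line and concentrate on (a).

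For the base cases $\phi = \ell$ and $\phi = \neg \ell$, both $t$ and the prefix-language semantics collapse to the propositional evaluation at position $0$, so the equivalence is immediate. For $\phi = \psi_1 \land \psi_2$, I would use that $\pref$ commutes with intersection, giving $\pref(\psi_1 \land \psi_2) = \pref(\psi_1) \cap \pref(\psi_2)$, then apply the inductive hypothesis. For $\phi = \ltlfX \psi$, every word has a length-$1$ prefix on which $\ltlfX$ is unsatisfiable, so $\pref(\ltlfX \psi) = \emptyset = \lang{\ltlffalse}$. For $\phi = \ltlfN \psi$, the length-$1$ prefix is satisfied vacuously and every longer prefix $w[0,n]$ satisfies $\ltlfN \psi$ iff $w[1,n] \models \psi$; hence $w \in \pref(\ltlfN \psi)$ iff the suffix $w[1{:}]$ lies in $\pref(\psi)$, which by the IH matches $\lang{\ltlfX_\infty t(\psi)}$. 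The case $\phi = \ltlfG \psi$ is analogous: unrolling the two nested universal quantifiers (over prefix length and over position of $\ltlfG$) gives $\pref(\ltlfG \psi) = \{w : \forall j \geq 0,\, w[j{:}] \in \pref(\psi)\}$, which matches $\lang{\ltlG t(\psi)}$ by the IH.

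The main obstacle is justifying the collapses $t(\ltlfF \psi) = t(\psi)$ and $t(\psi_1 \ltlfU \psi_2) = t(\psi_2)$. The intuition is that because $\pref$ quantifies over \emph{all} finite prefixes, including the length-$1$ prefix, the $\ltlfF$ and $\ltlfU$ modalities are forced to be witnessed at position $0$, so they collapse to their scope (respectively, right argument). Concretely, I would prove $\pref(\ltlfF \psi) = \pref(\psi)$ by showing both inclusions: the $\supseteq$ direction is immediate by witnessing $\ltlfF$ with $j = 0$ on every prefix, while for $\subseteq$ I would extract from the length-$1$ prefix the fact that $\psi$ is satisfied at position $0$ and then propagate this to every longer prefix by a secondary induction on $\psi$ within the fragment. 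The key property to maintain is a left-monotonicity statement, enabled precisely by the absence of $\ltlfR$ and of non-literal disjunctions — these are exactly the connectives that could otherwise make the evaluation of $\psi$ at position $0$ depend non-monotonically on the length of the tail. The $\psi_1 \ltlfU \psi_2$ case is then handled identically, with $\psi_1$'s prefix obligation becoming vacuous once $j = 0$ is forced as the witness, so that $\psi_1 \ltlfU \psi_2$ reduces to $\psi_2$ at position $0$ and the IH on $\psi_2$ closes the argument.
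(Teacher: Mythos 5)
Your overall strategy is the same as the paper's: structural induction with exactly this case split, the collapse of $\ltlfF$ and $\ltlfU$ onto their (right) arguments, $\pref(\ltlfX\psi)=\emptyset$, and the suffix reformulations for $\ltlfN$ and $\ltlfG$. Your handling of the base cases, $\land$, $\ltlfX$, $\ltlfN$ and $\ltlfG$ matches the paper's appendix proof step for step, and those cases are fine.

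The gap is precisely in the step you flag as the main obstacle. For the inclusion $\pref(\ltlfF\psi)\subseteq\lang{t(\ltlfF\psi)}$ you extract $w[0,1]\models\psi$ from the length-one prefix and then propagate it to all longer prefixes via a ``left-monotonicity'' claim that you attribute to the absence of $\ltlfR$ and of non-literal disjunction. That monotonicity statement is false for this fragment: the connectives that break it are $\ltlfN$ and $\ltlfG$, both of which the fragment allows. Concretely, $\ltlfN a$ holds on \emph{every} trace of length one (vacuously, since there is no next position) yet fails on $w[0,2]$ whenever $a\notin w_1$; similarly $\ltlfG a$ holds on $w[0,1]$ whenever $a\in w_0$ but can fail on $w[0,2]$. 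This is not merely a presentational issue in your propagation lemma --- it defeats the conclusion itself: take $\phi=\ltlfF\,\ltlfN\,a$, which is in the fragment. Every finite trace satisfies $\phi$ (witness its last position, where $\ltlfN a$ holds vacuously), so $\pref(\phi)=\Sigma^{\omega}$, whereas $t(\phi)=t(\ltlfN a)=\ltlfX_{\infty}\,a$, whose language is a proper subset of $\Sigma^{\omega}$. The same phenomenon breaks the $\ltlfU$ case via $(a\lor\neg a)\,\ltlfU\,\ltlfN a$. In fairness, the paper's own proof makes the identical unjustified leap (``$w[0,1]\models\psi$ \ldots\ it follows that for every $i>0$, $w[0,i]\models\psi$''), so you have faithfully reproduced its argument including its unproved step; but the secondary-induction lemma you propose in order to fill that step is not true of the fragment as defined, and the $\ltlfF$/$\ltlfU$ cases only go through if the arguments of those modalities are further restricted (e.g., forbidding $\ltlfN$, $\ltlfG$ and $\ltlfX$ beneath $\ltlfF$ and $\ltlfU$).
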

\begin{proof}
    Trivially, $\mathcal{O}(|\phi|) = \mathcal{O}(|t(\phi|)$ holds. 
    We prove that $\lang{t(\phi)} = \pref(\phi)$ by structural induction on $\phi$. In the interest of space, we skip the base cases ($\ell$ and $\neg\ell$). We also skip the $\land$ and $\ltlfG$ modalities, as they are intuitive. We present the argument for $\ltlfX$, $\ltlfN$, $\ltlfF$, and $\ltlfU$. The full proof has been deferred to the appendix.

    We set up notations: for $w = w_0 w_1 \cdots\in \alphabet^{\omega}$, let $w[i,j]=w_i\cdots w_{j-1}$ denote subsequences of $w$ for $0\leq i<j$.
    So, $w[0,n]$ is the $n$-length prefix of $w$ for $n > 0$.
     By inductive hypothesis (I.H.), we assume $\lang{t(\gamma)} = \pref(\gamma)$ for $\gamma \in \{\psi, \psi_1, \psi_2\}$.   
     
     \begin{description}
        \item[Case $\ltlfF \psi$:]
        The critical observation is that for $\ltlfF\psi$ to hold on all finite prefixes,  $\ltlfF\psi$ must hold on the prefix of length 1, which in turn is possible only if the first position of the word satisfies $\psi$. Formally,  first  we show that $\pref(\ltlfF\psi) \subseteq\lang{t(\ltlfF\psi)}$. Let $w \in \pref(\ltlfF\psi)$. Then, in particular $w[0,1] \models \ltlfF\psi$. This is possible only if $w[0,1] \models \psi$. Thus, for all $n>0$, we get $w[0,n] \models \psi$. So, $w \in \pref(\psi)$. By I.H., $w \in \lang{t(\psi)}$. By translation, this means $w \in \lang{t(\ltlfF\psi)}$.
        Next, we show $\lang{t(\ltlfF\psi)} \subseteq \pref(\ltlfF\psi)$. Let $w \in \lang{t(\ltlfF\psi)}$. By translation, $w \in \lang{t(\psi)}$. By I.H., $w \in \pref(\psi) $. Now, if $\psi$ holds, then $\ltlfF\psi$ also holds for all non-zero lengths. Hence, $w \in \pref(\ltlfF\psi)$.

        \item[Case $\psi_1\ltlfU\psi_2$:]
        As earlier, the critical observation is for $\psi_1\ltlfU\psi_2$ to hold on a prefix of length one. For this, $\psi_2$ must hold. The proof is similar to the earlier case.   

        \item[Case $\ltlfX \psi$:] The issue is that $\ltlfX \psi$ can never be true on a word of length one, since there does not exist a next position on length one words. Hence, $\pref(\ltlfX \psi) = \emptyset = \lang{\mathsf{False}} = \lang{t(\ltlfX \psi)}$. 

        \item[Case $\ltlfN \psi$:] $\ltlfN$ (Weak Next) doesn't have the issue faced by $\ltlfX$. If a word is of length one, $\ltlfN \psi$ trivially holds. For words of all other lengths, it requires $\ltlfX \psi$ to hold. Formally, first we show that $\pref(\ltlfN \psi) \subseteq \lang{t(\ltlfN \psi)}$. Let $w \in \pref(\ltlfN \psi) $.  Then, by semantics of $\ltlf$, it follows that the second position on $w$ must satisfy $\psi$, i.e., $w[1,2] \models \psi$. In particular, for all $i>1$, $w[1,i] \models \psi$. So, $w[1,\infty] \in \pref(\psi)$.
        By I.H., $w[1,\infty] \in \lang{t(\psi)}$. Hence, $w \in \lang{\ltlfX_\infty t(\psi)} = \lang{t(\ltlfN\psi)}$.
        Conversely, let $w \in \lang{t(\ltlfN\psi)}$. By translation, $w \in \lang{\ltlfX_\infty t(\psi)}$.
        Hence, by I.H., we get for all $i>1$, $w[0,i] \models \ltlfX \psi$ and $w[0,1] \models \ltlfN\psi$ since $w[1, \infty] \in \lang{t(\psi)} = \pref(\psi) $. In other words, $w \in \pref(\ltlfN\psi)$.
    \end{description}
    \qed
\end{proof}

An immediate consequence of Lemma~\ref{lem:ltlf-f-g} is that the prefix automata for $\ltlf_{\setminus \setnocond{\ltlfR, \lor}}$ are singly exponential in the size of the formula~\cite{DBLP:journals/iandc/VardiW94}:
\begin{corollary}
\label{thm:ltlf-f-g-prefix}
    Let $\phi \in \ltlf_{\setminus \setnocond{\ltlfR, \lor}}$.
    The NBA for $\pref(\phi)$ contains $2^{\bigO(\size{\phi})}$ states.
\end{corollary}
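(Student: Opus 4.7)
The plan is to chain Lemma~\ref{lem:ltlf-f-g} with the classical automata-theoretic translation from $\ltl$ to NBAs. First, given $\phi \in \ltlf_{\setminus \setnocond{\ltlfR, \lor}}$, I would invoke Lemma~\ref{lem:ltlf-f-g} to obtain the $\ltl$ formula $t(\phi)$ produced by the translation $t(\cdot)$ defined in Section~\ref{ssec:prefix-fragment}. That lemma gives both the semantic identity $\lang{t(\phi)} = \pref(\phi)$ and the size bound $|t(\phi)| = \bigO(|\phi|)$, so constructing an NBA for the prefix language reduces directly to constructing an NBA for the $\ltl$ formula $t(\phi)$.

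Second, I would apply the Vardi--Wolper construction~\cite{DBLP:journals/iandc/VardiW94}, which turns any $\ltl$ formula $\psi$ into an NBA recognizing $\lang{\psi}$ with $2^{\bigO(|\psi|)}$ states. Instantiating with $\psi = t(\phi)$ yields an NBA of size $2^{\bigO(|t(\phi)|)}$, and by the linear blow-up of the translation this is $2^{\bigO(|\phi|)}$, as claimed.

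There is essentially no obstacle: all of the technical content has already been absorbed into Lemma~\ref{lem:ltlf-f-g}. The only point worth a quick sanity check is that $|t(\phi)|$ is bounded by a constant multiple of $|\phi|$ independent of $\phi$; inspecting each clause of $t(\cdot)$ shows that every constructor of $\phi$ is mapped to at most one corresponding $\ltl$ operator (indeed, the cases $\ltlfF\psi$ and $\psi_1 \ltlfU \psi_2$ strictly shrink the formula by discarding the outer temporal operator), so a linear bound with constant factor one suffices. Combining the two steps gives an NBA with $2^{\bigO(|\phi|)}$ states for $\pref(\phi)$.
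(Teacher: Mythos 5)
Your proposal is correct and follows exactly the paper's intended argument: the corollary is stated as an immediate consequence of Lemma~\ref{lem:ltlf-f-g} together with the standard exponential $\ltl$-to-NBA construction of Vardi and Wolper, which is precisely the two-step chain you describe. Your added sanity check that $t(\cdot)$ is size-non-increasing on each constructor is a harmless elaboration of the lemma's $\mathcal{O}(|\phi|) = \mathcal{O}(|t(\phi)|)$ claim.
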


Note that, in all the cases above, every conjunct holds on \emph{all} finite prefixes. This may not be true if  $\lor$ (or) is permitted in the formula. For example, consider $\phi = \ltlfG a \lor \ltlfF b$. Now, the word $w = \{a\}\{b\}\{\}^{\omega} \in \pref(\phi)$ since the prefix of length one satisfies $\ltlfG a$ and all other prefixes satisfy $\ltlfF b$. Hence, with disjunction, different prefixes can satisfy \emph{different} disjuncts. In fact, the $\ltl$ formula for $\pref(\phi)$ is $a \ltlfU_\infty b \lor \ltlfG_\infty a$. However, such translations may increase the formula length because of duplicating the formula under $\ltlG$ modality. An open problem here is to identify the largest fragment for which the prefix automata have only singly exponential blow-up. This goes hand-in-hand with uncovering the core behind the doubly exponential blow-up for prefix automata.

\section{Complexity of $\ltlf$ Model Checking}
\label{sec:complexity}

We present the complexity of $\ltlf$ model checking.
Section~\ref{sec:lowerbound} develops the lower bound for model checking non-terminating systems and  Section~\ref{sec:final} presents the completeness argument for both terminating and non-terminating systems.


\subsection{\expspace \ Lower Bound for Non-terminating Systems}
\label{sec:lowerbound}

\sloppy 

We prove \expspace-hardness of $\ltlf$ model checking of non-terminating systems by a polynomial-time reduction from the problem of whether an exponential-space Turing machine $T = (Q, \Gamma, \delta, q_0, F)$ accepts an input word $x = x_1 \ldots x_n$. The components of the Turing machine are defined as follows:

\begin{itemize}
    \item $Q$ is the set of states and  $q_0 \in Q$ is the initial state.
    \item $\Gamma$ is the tape alphabet, which is assumed to include the blank symbol $\emptyset$.
    \item $\delta : Q \times \Gamma \rightarrow Q \times \Gamma \times \{\leftarrow, \rightarrow\}$ is the transition function. $\delta(q, \gamma) = (q', \gamma', d)$ means that if the machine is in state $q$ and the head reads symbol $\gamma$, it moves to state $q'$, writes symbol $\gamma'$, and moves the head in direction $d$. 
    \item $F \subseteq Q$ is the set of accepting states. The machine accepts if it reaches a state in $F$.
\end{itemize}
Since $T$ is an exponential-space Turing machine, we can assume that its tape has $2^{cn}$ cells, where $n$ is the size of the input and $c$ is a constant.

\subsubsection{High-Level Idea}

Given a Turing machine $T$ and an input $x$, our reduction will construct a non-terminating system $M$ and an $\ltlf$ formula $\varphi$ s.t. $T$ accepts $x$ iff every execution of $M$ has a finite prefix that satisfies $\varphi$, i.e., $M \models \varphi$.


In this reduction, we will encode runs of the Turing machine as label sequences of the system. A {\em cell} in the tape is encoded as a sequence of $cn + 1$ propositional assignments. The first assignment encodes the content of the cell, which can be either a symbol $\gamma \in \Gamma$ or a symbol $\gamma$ along with a state $q \in Q$, the latter indicating that the head is on that cell and is in state $q$. The remaining $cn$ assignments encode the position of the cell in the tape as a $cn$-bit number (since the tape has $2^{cn}$ cells).
The concatenation of $2^{cn}$ cells encodes a {\em configuration} of the Turing machine. Therefore, each configuration is encoded by $2^{cn} (cn + 1)$ assignments in total. The concatenation of configurations encodes a {\em run} of the Turing machine. Note, however, that for such a run to be consistent with the run of $T$ on $x$, certain consistency conditions must hold:

\begin{enumerate}
    \item For every configuration, the encoding of the position of the first cell must be $0$, and the encoding must increase by $1$ for each successive cell.
    \item The first configuration must start with $x$ on the tape and the head on the first cell and in the initial state $q_0$.
    \item Successive configurations must be consistent with the transition function $\delta$.
\end{enumerate}

One way is to enforce all consistency conditions through the system $M$. However, since each configuration consists of $2^{cn}$ cells, this would require the system to have an exponential number of states. Therefore, to allow for a polynomial reduction, we enforce the consistency conditions through the formula $\varphi$.

For this, we construct an $\ltlf$ formula $\varphi := \varphi_{cons}  \rightarrow \varphi_{acc} $.
where $\varphi_{cons}$ expresses the the consistency conditions and $\varphi_{acc}$ expresses the property of reaching an accepting configuration. Therefore, every execution with a finite prefix that satisfies $\varphi$ is either inconsistent or an accepting run of $T$ on $x$. Since $T$ is deterministic, there is exactly one execution of $M$ that is consistent with $T$. Every other execution will necessarily satisfy $\neg \varphi_{cons}$, and this execution will satisfy $\varphi_{acc}$ if and only if $T$ accepts $x$. Therefore, if every execution of $M$ has a finite prefix that satisfies $\varphi$, then the run of $T$ on input $x$ is accepting, and vice-versa.


We now provide the details of the system $M$ and the formula $\varphi$.

\subsubsection{Atomic Propositions} \label{sec:prop}

The propositions used by system $M$ are the following:

\begin{itemize}
    \item $part_0$ indicates that the current assignment represents the first part of the cell encoding, encoding the cell's content.
    \item $part_i$, for $1 \leq i \leq cn$, indicates that the current assignment represents the $i$-th bit of the encoding of the cell's position. Only one of $part_0, \ldots, part_{cn}$ is true at any given time.
    \item $cell_\lambda$, for $\lambda \in \Gamma \cup (Q \times \Gamma)$, indicates that the content of the cell is $\lambda$ (a tape symbol with or without the head). This proposition can only be true if $part_0$ is true.
    \item $bit$ gives the current bit of the cell's position. This proposition can only be true if $part_0$ is false.
\end{itemize}

\subsubsection{The Model} \label{sec:model}

We define the transition system $M = (\Sigma, S, T, \iota, L)$ as follows:

\begin{itemize}
    \item $\Sigma = \{part_0, \ldots, part_{cn}\} \cup \{cell_\lambda \mid \lambda \in \Gamma \cup (Q \times \Gamma)\} \cup \{bit\}$
    \item $S = \{(0, \lambda) \mid \lambda \in \Gamma \cup (Q \times \Gamma)\} \cup \{(i, b) \mid 1 \leq i \leq cn, b \in \{0, 1\}\}$
    \item $\iota = (0, (q_0, \emptyset))$
    \item $(s, s') \in T$ if and only if one of the following is true (for some $\lambda, b, b'$):
    
    \begin{itemize}
        \item $s = (0, \lambda)$ and $s' = (1, b)$.
        \item $s = (i, b)$ for $1 \leq i < cn$, and $s' = (i + 1, b')$.
        \item $s = (cn, b)$ and $s' = (0, \lambda)$.
    \end{itemize}
    
    \item $L((0, \lambda)) = \{part_0, cell_\lambda\}$
    \item $L((i, b)) = \{part_i\} \cup \{bit \mid b = 1\}$

    
    
\end{itemize}

The propositional alphabet $\Sigma$ consists of the set of propositions described above. The states of the $M$ are either of the form $(0, \lambda)$, where $\lambda$ is the content of a cell, or $(i, b)$ for $1 \leq i \leq cn$, where $b$ is the current bit in the encoding of the cell's position. The initial state is $(0, (q_0, \emptyset))$, indicating that a) this is the first part of the cell's encoding, b) the head is on this cell, c) the machine is in the initial state $q_0$, and d) the cell is blank (this should be the cell immediately to the left of the input word $x$).

The transition relation ensures only that the system progresses consistently from part 0 of the encoding to part 1, part 2, part 3, and so on until part $cn$, after which it resets back to part 0 (of the next cell). Note that the values of $\lambda$ and $b$ are unconstrained, as these will be handled by the formula $\varphi$. Observe the three consistency conditions required for runs of $T$ are not wired into the model. 

Finally, the labeling function $L$ simply converts the state into an appropriate propositional representation.

\subsubsection{The Formula}

We now construct the $\ltlf$ formula $\varphi$ over the propositional alphabet $\Sigma$. As mentioned before, we want $\varphi$ to be such that, if an execution of the system $M$ has a prefix that satisfies $\varphi$, then either that execution violates a consistency condition or it is an accepting run.
To achieve this, we construct $\varphi = \neg \varphi_{cons} \lor \varphi_{acc}$. 
$\varphi_{acc}$ is defined as follows: $$\varphi_{acc} = \bigvee_{q \in F} \bigvee_{\gamma \in \Gamma} \ltlfF\,cell_{(q, \gamma)}.$$ It is easy to see that an execution of $M$ has a prefix that satisfies $\varphi_{acc}$ iff that execution reaches a state $(0, (q, \gamma))$ where $q$ is an accepting state of $T$.

Meanwhile, we define $\varphi_{cons}$ as a conjunction of formulas, such that if an execution has a prefix that violates one of these formulas then the execution is inconsistent, and every inconsistent execution has a prefix that violates one of these formulas. We classify these formulas into three groups, one for each of the three consistency conditions described above:

\begin{enumerate}
    \item[(C1).] Consistency within a configuration (the binary encoding of each cell's position is correct)
    \item[(C2).] Consistency with the input word (the first configuration is correct)
    \item[(C3).] Consistency with the transition function (every configuration follows from the previous one)
\end{enumerate}

The first two conditions (C1) and (C2) are relatively straightforward to encode as formulas of polynomial size. For details, refer to the appendix. 

The third condition (C3) is where the biggest challenge lies. 
This condition requires reasoning about changes from one configuration to the next. The difficulty lies in accessing the segment that represents the same cell in the next configuration using a polynomial-sized formula. Recall that a cell is represented by $cn + 1$ assignments in the trace and each configuration is composed of $2^{cn}$ cells.  Since the size of each configuration is exponential, formulas may require exponential size. For instance, if the segment representing a cell begins at assignment $i$ in the trace, then the same cell in the next configuration will start at assignment $i + 2^{cn}(cn + 1)$. Referring to this assignment directly in the formula would require $2^{cn}(cn + 1)$ nested $\ltlfX$ operators. Alternatively, the cell in the next configuration can  be identified by being the first cell where the binary encoding of its position on the tape is the same as the current cell. However, this may require enumeration on all possible assignments of the $cn + 1$ bits.

To circumvent this problem and compare corresponding cells in two different configurations using a formula of polynomial size, we take advantage of the fact that we are dealing with finite prefixes of the trace. The insight is that we can use the last position in the trace as an anchor, so that instead of having to find the cell in the next configuration with the same position encoding, we can instead look at the last cell in the trace and test if a) it is in the next configuration, and b) it has the same position encoding. Since the formula is checked for every prefix, eventually we will find a prefix where this holds. We can then check if the contents of the cells are consistent with the transition function.

We now go into details of the formula for (C3).
Consistency condition (C3) says that every configuration follows from the previous one according to $T$'s transition function $\delta$. As mentioned before, to ensure that we get a formula of polynomial size, the formula that we construct actually expresses the following condition: for all cells $c$ in the prefix, if the last cell $c_{Last}$ of the prefix is in the same position as $c$ but in the next configuration, then $c_{Last}$ follows from $c$ based on the transition function. Since the formula must hold for all prefixes, its satisfaction implies the original consistency condition.

We start by defining the useful shorthand $L^{-i} \phi \equiv \ltlfF(\phi \land \ltlfX^{i-1} \neg \ltlfX\,\ltlftrue)$, which denotes that $\phi$ holds $i$ positions before the end of the prefix (e.g. $L^{-1}\phi$ means that $\phi$ holds at the last position of the prefix). This is expressed by saying that at some point in the future $\phi$ holds, and $i - 1$ positions after that is the last position of the prefix (by the semantics of $\ltlf{}$, $\neg \ltlfX\,\ltlftrue$ only holds at the last position).
We then define the formula $\mathsf{MatchLastCell}$, which checks if the cell $c$ in the current position corresponds to the last cell $c_{Last}$ of the prefix, as follows:
\begin{align*}
    \mathsf{MatchLastCell} \equiv & \text{  } part_0 \land L^{-cn} part_0 \land \bigwedge^{cn}_{i = 1} (\ltlfX^i bit \leftrightarrow L^{-cn} \ltlfX^i bit) \\
        \land & \text{  } \ltlfX \Big(\neg \mathsf{NewConfig}\,\ltlfU\,\big(\mathsf{NewConfig}   \land \ltlfX\,\ltlfG\,\neg \mathsf{NewConfig}\big)\Big)
\end{align*}
where $\mathsf{NewConfig} \equiv (part_0 \land \bigwedge^{cn}_{i = 1} (\ltlfX^i \neg bit))$ denotes the start of a new configuration (a cell whose position in the tape is encoded as $0$). $\mathsf{MatchLastCell}$ expresses that (a) we are at the start of a cell $c$ ($part_0$); (b) the last $cn$ positions of the prefix encode another cell $c_{Last}$ ($L^{-cn} part_0$); (c) $c$ and $c_{Last}$ are in the same tape position ($\bigwedge^{cn}_{i = 1} (\ltlfX^i bit \leftrightarrow L^{-cn} \ltlfX^i bit)$); and (d) we start a new configuration exactly once between $c$ and $c_{Last}$ ($\ltlfX(\neg \mathsf{NewConfig}\,\ltlfU\,(\mathsf{NewConfig} \land \ltlfX\,\ltlfG\,\neg \mathsf{NewConfig}))$). In other words, $c$ and $c_{Last}$ are the same cell in successive configurations.
We can then encode the consistency condition by the formula
\begin{align*}
    & \ltlfG(\mathsf{MatchLastCell} \rightarrow \varphi_\delta)\,\land 
    \ltlfG(\mathsf{MatchLastCell} \rightarrow \varphi^{\leftarrow}_\delta)\,\\
    \land \  &\ltlfG(\ltlfX^{cn+1}\,\mathsf{MatchLastCell} \rightarrow \varphi^{\rightarrow}_\delta)\,\land 
    \ltlfG(\ltlfX^{cn+1}\,\mathsf{MatchLastCell} \rightarrow \varphi^0_\delta)
\end{align*}
where each of $\varphi_\delta$,  $\varphi^{\leftarrow}_\delta$,  $\varphi^{\rightarrow}_\delta$, and $\varphi^0_\delta$ expresses one way in which the contents of the cell $c$ can change (or not change) in the next configuration:


\begin{itemize}
    \item $\varphi_\delta$ expresses that if the head is on $c$ ($cell_{(q, \gamma)}$), then in $c_{Last}$ the head must have moved to a different cell and written the appropriate symbol $\gamma'$ given by the transition relation ($L^{-cn}\,cell_{\gamma'}$)
    
    \item $\varphi^{\leftarrow}_\delta$ expresses that if the head is on the cell to the \emph{right} of $c$ ($\ltlfX^{cn + 1}\,cell_{(q, \gamma_2)}$), and the transition relation requires it to move left, then in the next configuration the head must have moved to $c_{Last}$ ($L^{-cn}\,cell_{(q', \gamma_1)})$)

    \item $\varphi^{\rightarrow}_\delta$ expresses that if the head is on the cell to the \emph{left} of $c$ ($cell_{(q, \gamma_1)}$), and the transition relation requires it to move right, then in the next configuration the head must have moved to $c_{Last}$ ($L^{-cn}\,cell_{(q', \gamma_2)})$)

    \item Finally, $\varphi^0_\delta$ expresses that if the head is neither on $c$ nor on the cells adjacent to it ($cell_{\gamma_1} \land \ltlfX^{cn + 1}\,cell_{\gamma_2} \land \ltlfX^{2(cn + 1)}\,cell_{\gamma_3}$), then the contents of the cell don't change ($L^{-cn}\,cell_{\gamma_2}$)
\end{itemize}

Note that in the latter two formulas $c$ is the cell to the right of the current cell ($\ltlfX^{cn+1}\,\mathsf{MatchLastCell}$) this is necessary so that $\varphi^{\rightarrow}_\delta$ and $\varphi^0_\delta$ can refer to the cell to the left of $c$. Formula for $\varphi_\delta$,  $\varphi^{\leftarrow}_\delta$,  $\varphi^{\rightarrow}_\delta$, and $\varphi^0_\delta$ have been presented in the appendix. The size of each formula is polynomial in the size of the transition relation of the Turing Machine.

\begin{theorem}[$\ltlf$ Model Checking. Lower bound]
\label{thrm:expspacehardreduction}
$\ltlf$ model checking of non-terminating systems is \expspace-hard.  
\end{theorem}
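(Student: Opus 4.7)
The plan is to show \expspace-hardness by a polynomial-time reduction from the acceptance problem for deterministic Turing machines with an exponential tape bound, which is $\expspace$-complete. Given such a machine $T=(Q,\Gamma,\delta,q_0,F)$ with tape bound $2^{cn}$ on input $x$ of length $n$, I would construct in polynomial time a non-terminating transition system $\M$ and an $\ltlf$ formula $\varphi$ such that $T$ accepts $x$ iff $\M \models \varphi$. The encoding represents runs of $T$ as label sequences: each cell is $cn+1$ assignments (one for its content in $\Gamma \cup (Q\times\Gamma)$, followed by $cn$ position bits), each configuration is $2^{cn}$ cells, and a run is a concatenation of configurations. Crucially, $\M$ itself only enforces the local structure (cycling through $part_0,\ldots,part_{cn}$), leaving both the contents and the position bits nondeterministic. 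This keeps $|\M|$ polynomial while letting $\M$ generate all candidate encodings.

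Correctness would follow from the choice $\varphi = \neg\varphi_{cons} \lor \varphi_{acc}$, where $\varphi_{cons}$ collects the three consistency conditions (C1)–(C3) and $\varphi_{acc}$ detects an accepting state. An execution of $\M$ has a prefix satisfying $\varphi$ iff that prefix is either already inconsistent with a run of $T$ on $x$ or already exhibits an accepting state. Because $T$ is deterministic, at most one execution of $\M$ satisfies every consistency condition on every prefix; every other execution violates some condition in a finite prefix and hence has a prefix satisfying $\varphi$ trivially. Therefore $\M\models\varphi$ holds iff the unique consistent execution is accepting, which holds iff $T$ accepts $x$. I would verify both directions of this equivalence explicitly, taking care that the universal quantification over finite prefixes in the definition of $\M\models\varphi$ matches the prefix-wise encoding of consistency.

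The main obstacle is expressing condition (C3), the transition consistency between successive configurations, using a formula of polynomial length. A naive formulation would need to point from a cell to its counterpart $2^{cn}(cn+1)$ positions later, which blows up either in nested $\ltlfX$ operators or in a disjunction over the $2^{cn}$ possible position codes. The key idea I would exploit is that $\ltlf$ model checking of a non-terminating system universally quantifies over \emph{all} finite prefixes, so the last position of the current prefix can serve as a movable anchor. Using the shorthand $L^{-i}\phi \equiv \ltlfF(\phi \land \ltlfX^{i-1}\neg\ltlfX\,\ltlftrue)$ to talk about a position $i$ before the end, I would define $\mathsf{MatchLastCell}$ to say that the current cell and the final cell of the prefix lie in consecutive configurations and share the same $cn$ position bits. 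Because the formula is evaluated on every prefix, for each pair of corresponding cells in successive configurations some prefix will make them match.

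Finally, wrapping the four local consistency checks $\varphi_\delta,\varphi^{\leftarrow}_\delta,\varphi^{\rightarrow}_\delta,\varphi^0_\delta$ (head on $c$, head on the right neighbor moving left, head on the left neighbor moving right, and head far away) under $\ltlfG(\mathsf{MatchLastCell}\rightarrow \cdot)$ forces every such matched pair to obey $\delta$. Combined with the much simpler polynomial encodings of (C1) (correct increment of position bits within a configuration) and (C2) (first configuration encodes $q_0$ and $x$), this yields a $\varphi_{cons}$ of polynomial size, and hence a polynomial $\varphi$. Checking that $|\M|$ and $|\varphi|$ are both polynomial in $|T|+|x|$ and that the reduction is correct in both directions would complete the argument.
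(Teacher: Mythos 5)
Your proposal is correct and follows essentially the same route as the paper's proof: a reduction from exponential-space Turing machine acceptance, with the transition system enforcing only the local cell structure, the formula $\varphi = \neg\varphi_{cons} \lor \varphi_{acc}$ carrying the consistency conditions, and the crucial last-position anchor $L^{-i}$ together with $\mathsf{MatchLastCell}$ to express transition consistency in polynomial size. The determinism-based correctness argument (exactly one consistent execution, all others falsified on some finite prefix) also matches the paper's.
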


\begin{proof}
Let the non-terminating system $M$ and $\ltlf$ formula $\varphi = \neg \varphi_{cons} \lor \varphi_{acc}$ be as described above. We show that an exponential-space Turing machine $T$ accepts an input word $x$ iff every execution of $M$ has a finite prefix that satisfies $\varphi$, i.e., $M\models \varphi$. 
Note that since $T$ is deterministic, its execution on the input word $x$ is unique. Therefore, there is exactly one trace $\pi$ of $M$ that simulates the execution of $T$ on $x$. By construction, a trace has a finite prefix that satisfies $\neg \varphi_{cons}$ iff that trace violates one of the consistency conditions. This holds for every trace of $M$ except $\pi$. 
So,  because no finite prefix of $\pi$ satisfies $\neg \varphi_{cons}$, 
$M$ model checks if and only if $\pi$ has a prefix that satisfies $\varphi_{acc}$, which means that $\pi$ eventually reaches an accepting state. Since $\pi$ simulates $T$ on $x$, this happens if and only if $T$ accepts $x$.
\qed
\end{proof}


\subsection{Final Complexity Results}
\label{sec:final}

Finally, we present the complexity of model-checking non-terminating systems:

\begin{theorem}[MC. Non-terminating. Complexity]
\label{thrm:upperboundnonprob}
$\ltlf$ model checking of non-terminating systems is \expspace-complete.
\end{theorem}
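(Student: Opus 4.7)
The plan is to combine the already-established lower bound from Theorem~\ref{thrm:expspacehardreduction} with a matching \expspace{} upper bound. For the upper bound, I would first use Definition~\ref{def:nonterminating} to reduce model checking to an emptiness check: $\M \models \phi$ iff no infinite execution of $\M$ has all its finite prefixes violating $\phi$, that is, $\lang{\M} \cap \pref(\neg\phi) = \emptyset$. This makes $\pref(\neg\phi)$ the canonical counterexample language, exactly as announced in Section~\ref{Sec:probstatement}.

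Next, I would invoke Theorem~\ref{thrm:PropertiesPref} to obtain a DBA $B$ recognizing $\pref(\neg\phi)$. The key quantitative observation is that, although $B$ has $2^{2^{O(|\phi|)}}$ states, each state can be encoded with only $2^{O(|\phi|)}$ bits, because $B$'s state set coincides with that of the determinized DFA for $\neg\phi$, whose states are subsets of the singly exponentially large NFA for $\neg\phi$. Hence the transition function and the accepting/sink test of $B$ can be evaluated on the fly in singly exponential space, using subset construction together with the local sink modification described in the proof of Theorem~\ref{thrm:PropertiesPref}.

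Then I would perform Büchi non-emptiness on the synchronous product $\M \times B$ in the standard on-the-fly fashion: nondeterministically guess an accepting product state $(s,q)$, guess a finite path from the initial product state to $(s,q)$, and then guess a nontrivial cycle back to $(s,q)$. At any moment the procedure only stores a constant number of product configurations plus a step counter bounded by the product size; since each product state fits in $O(\log|\M|) + 2^{O(|\phi|)}$ bits, the total working space is singly exponential in $|\phi|$. This gives an \textsf{NEXPSPACE} procedure, which collapses to \expspace{} by Savitch's theorem. Combined with Theorem~\ref{thrm:expspacehardreduction}, this yields \expspace-completeness.

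The main obstacle is justifying that every interaction with $B$ can be simulated without ever materializing the doubly exponential automaton. This reduces to two local subroutines: given an exponential-space encoding of a DFA state (a subset of NFA states for $\neg\phi$) and an input letter drawn from the label $L(s)$ of the current $\M$-state, compute the successor subset; and decide whether the current state is an accepting sink of $B$. Both operations are standard subset-construction computations and run comfortably within exponential space, so the overall product emptiness check stays in \expspace{} as required.
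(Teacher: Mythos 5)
Your proposal is correct and follows essentially the same route as the paper: reduce to emptiness of $\lang{\M}\cap\pref(\neg\phi)$, run the construction of Theorem~\ref{thrm:PropertiesPref} on the fly with each DBA state stored as a subset of the singly exponential NFA (hence in $2^{O(|\phi|)}$ bits), and do the standard guess-a-lasso emptiness check, with the lower bound supplied by Theorem~\ref{thrm:expspacehardreduction}; you merely make explicit the space accounting that the paper leaves to the reader. One immaterial slip: for $\pref(\neg\phi)$ the underlying DFA in that construction is for $\phi$ (the negation of $\neg\phi$), not for $\neg\phi$, but this changes nothing in the complexity analysis.
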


\begin{proof}
Recall, a non-terminating system $\M$ satisfies an $\ltlf$ formula  $\phi$ iff $\lang{\M}\cap \pref(\neg\phi) = \emptyset$.
A naive algorithm would explicitly construct $\pref(\neg\phi)$ and require doubly exponential space in the size of $\phi$. 
Instead, the approach is to construct $\pref(\phi)$ on-the-fly in exponential space and simultaneously evaluate the emptiness of $\M\cap \pref(\neg\phi)$. 
Given all three steps in the construction of $\pref(\phi)$ are amenable to on-the-fly constructions, this procedure follows standard on-the-fly procedures~\cite{vardi1986automata}. 
Thus, $\ltlf$ model checking of non-terminating models is in \expspace.
\autoref{thrm:expspacehardreduction} establishes the matching lower bound.
\qed
\end{proof}

This result is unexpected as it implies that $\ltlf $ model checking is exponentially harder than $\ltl$ model checking for non-terminating systems, contrary to the prior perception that problems in $\ltlf$ tend to be as hard if not easier than their counterparts in $\ltl$ (See~\autoref{tab:complexity}). 

Next, we present the complexity of  model-checking terminating systems:

\begin{theorem}[MC. Terminating. Complexity]
\label{thrm:completeterminating}
$\ltlf$ model checking of terminating systems is \pspace-complete.  
\end{theorem}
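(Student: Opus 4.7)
The plan is to prove both \pspace{} containment and \pspace-hardness. For the upper bound, I would exploit that executions of a terminating system are finite: $\M \models \phi$ iff no finite label sequence of $\M$ is a model of $\neg\phi$. I would build an NFA $A_{\neg\phi}$ for $\neg\phi$ of size $2^{\bigO(\size{\phi})}$~\cite{de2013linear}, then decide whether the product $\M \times A_{\neg\phi}$ admits a path from its initial configuration to some pair $(s, q)$ with $s \in F$ and $q$ accepting. Each product configuration has only polynomial bit-length, so an on-the-fly reachability check---guessing successors one step at a time---yields an \textsf{NPSPACE} procedure that collapses to \pspace{} by Savitch's theorem, mirroring the classical on-the-fly algorithm for \ltl{} model checking~\cite{vardi1986automata}. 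The key contrast with the non-terminating case is that the counterexample language is now simply $\lang{A_{\neg\phi}}$ of singly exponential size, rather than $\pref(\neg\phi)$ of doubly exponential size (Section~\ref{ssec:prefix-construction}).

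For the lower bound, I would reduce from $\ltlf$ satisfiability, which is \pspace-complete~\cite{de2013linear}. Given an $\ltlf$ formula $\phi$ over $\ap$, the goal is to build a polynomial-size terminating transition system $\M_\phi$ whose set of finite label sequences (after a fixed initial marker) ranges over all finite traces over $2^{\ap}$. To make $\M_\phi$ polynomial in size despite there being $2^{\size{\ap}}$ possible labels, I would use a macro-step gadget in which each macro-step is a sequence of $\size{\ap}$ states that commits non-deterministically, one proposition at a time, to a truth assignment; every macro-step boundary state is marked terminal. In tandem, I would translate $\neg\phi$ into an $\ltlf$ formula $\phi'$ of polynomial size by replacing every $\ltlfX$ with $\ltlfX^{\size{\ap}}$ and every atomic reference $p_i$ with $\ltlfX^{i-1} p_i$, with analogous adjustments to $\ltlfU$ and $\ltlfR$. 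By construction, $\phi$ is unsatisfiable iff $\M_\phi \models \phi'$, and since \pspace{} is closed under complement this yields \pspace-hardness.

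The main obstacle is the lower bound, and specifically keeping the universal model $\M_\phi$ polynomial-sized: a naive construction with one state per label would be exponential in $\size{\ap}$, so the macro-step encoding together with the matching syntactic translation of the formula is the crucial step. Correctness of the reduction amounts to verifying that every finite trace over $2^{\ap}$ is realized by exactly the expected macro-step execution of $\M_\phi$ and that the translation $\phi \mapsto \phi'$ faithfully interprets each original modality at the right offset within the macro-step.
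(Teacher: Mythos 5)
Your proof is correct, and the upper bound is essentially the paper's own argument: intersect $\M$ with the singly exponential NFA $A_{\neg\phi}$ and check non-emptiness on the fly in polynomial space. Where you genuinely diverge is the lower bound. The paper dismisses it as ``a trivial reduction from $\ltlf$ satisfiability,'' which implicitly relies on a universal terminating system whose executions realize every finite trace over $2^{\ap}$; since the labeling function assigns each state a fixed subset of $\ap$, that system needs $2^{\size{\ap}}$ states and is therefore \emph{not} polynomial in the input when $\ap$ grows with the formula. Your macro-step gadget---serializing each letter of $2^{\ap}$ into $\size{\ap}$ single-bit positions, marking macro-step boundaries as terminal, and rewriting $\neg\phi$ with $\ltlfX^{\size{\ap}}$ in place of $\ltlfX$ and $\ltlfX^{i-1}p_i$ in place of $p_i$ (which keeps the formula at size $\bigO(\size{\phi}\cdot\size{\ap})$)---closes exactly this gap, at the cost of having to verify that the $\ltlfU$/$\ltlfR$ clauses are relativized to boundary positions and that termination is permitted at every boundary so that all finite traces are realized. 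So your route buys a reduction that is demonstrably polynomial, whereas the paper's one-liner is only correct modulo either your encoding or an appeal to the (true but unstated) fact that $\ltlf$ satisfiability remains \pspace-hard over a constant number of propositions. One small point to tighten: state explicitly how $\ltlfU$ is relativized (e.g., guarding both arguments by the boundary marker), since that is where an off-by-one in the offset arithmetic would otherwise hide.
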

\begin{proof}
Recall that a terminating system $M$ satisfies an $\ltlf$ formula $\phi$ if every execution of $M$ satisfies $\phi$.
So, $M\models \phi$ iff $\lang{M\cap A_{\neg\phi}} = \emptyset$ where $A_{\neg\phi}$ is the NFA for $\neg\phi$. Since the NFA is exponential in the size of the $\ltlf$ formula~\cite{de2013linear}, an on-the-fly algorithm for non-emptiness checking of $M\cap A_{\neg\phi}$ can be performed in  \pspace. 
\pspace-hardness can be proven by a trivial reduction from {\em $\ltlf$ satisfiability}, which is  \pspace-complete~\cite{de2013linear}. 
\qed
\end{proof}

For $\ltlf $ synthesis, these results imply that it is much harder to verify a non-terminating transducer than a terminating transducer. Hence, to test the correctness of an $\ltlf$ synthesis tool by verifying its output strategy, it would be better for $\ltlf$ synthesis tools to generate terminating transducers. This, to the best of our knowledge, is the \emph{first} theoretically sound evidence to use one transducer over the other in $\ltlf$ synthesis.  
\section{Concluding Remarks}
\label{sec:discussion}

Motivated by the recent surge in $\ltlf$ synthesis tools that are rarely verified for result correctness, this work is the \emph{first} to investigate the problem of $\ltlf$ model checking. Noting that $\ltlf$ synthesis can generate both terminating and non-terminating transducers, we examine $\ltlf$ model checking for both possibilities. Our central result is that $\ltlf$ model checking of non-terminating models is exponentially harder than terminating models. Their complexities are $\expspace$-complete and $\pspace$-complete, respectively. 
This is surprising at first as it implies that $\ltlf $ model checking is harder than $\ltl$ model checking for non-terminating models, contrary to the expectation from prior comparisons between $\ltlf$ and $\ltl$ (See~\autoref{tab:complexity}). 
In addition to being of independent interest, our results immediately lend several broad impacts:
\begin{enumerate}
    \item They present the first theoretical evidence for the use of terminating transducers to represent the  synthesized strategies in $\ltlf$ synthesis, as it would be easier to verify the correctness of the synthesized transducer.
    \item Implementations of our $\ltlf$ model checking algorithms could be deployed in  large-scale competitions such as the $\ltlf$ track in SYNTCOMP 2023. 
    \item They invite further exploration into $\ltlf$ vs $\ltl$, as it breaks the prior perception that  problems in $\ltlf$ are  {\em as hard if not simpler} than their $\ltl$ counterparts. 
\end{enumerate}
    
These results inspire future work in the development of practical tools for model checking and synthesis as well as the development of $\ltlf$ model checking in more complex domains such as probabilistic models or under asynchrony~\cite{bansal2019synthesis,bansal2018synthesis}.  
It would be interesting to see how the practical implementations compare for $\ltlf$ model checking under terminating and non-terminating semantics, even though terminating models are preferred in theory.

\subsubsection*{Acknowledgements} We thank the anonymous reviewers for their valuable feedback.
This work has been supported by the Engineering and Physical Sciences Research Council [grant number EP/X021513/1], NASA 80NSSC17K0162, NSF grants IIS-1527668, CCF-1704883,
IIS-1830549, CNS-2016656, DoD MURI grant N00014-20-1-2787,
and an award from the Maryland Procurement Office.

%
%
\bibliographystyle{splncs04}
\bibliography{refs}
%
\clearpage 
\section*{Appendix}

\subsection{Automata over Finite and Infinite words}

A (nondeterministic) \emph{automaton} is a tuple $\A = (\alphabet, S , \init, \delta, F )$ 
where $\alphabet$ is a finite set of symbols (called an alphabet),
$S$ is a finite set of states,
$\init \in \states$ is the initial state,
$F \subseteq S$ is the set of accepting states, and
$\delta \subseteq S \times \Sigma \times S $ is the transition relation. 
An automaton on \emph{finite} words is called a \emph{nondeterministic finite-state automaton} (NFA), while an automaton over \emph{infinite} words is called a \emph{nondeterministic B\"uchi automaton} (NBA). 
An NFA is said to be \emph{deterministic} (DFA) if for each state $s$ and letter $a$, $ |\{s'|(s, a, s') \in \delta \textrm{ for some $s'$} \}|
\leq 1 $.
Deterministic B\"uchi automata (DBAs) are defined analogously.

Let $\A$ be an NFA.
For a finite word $w = w_0\cdots w_n \in \alphabet^*$, a {\em run} of $\A$ over $w$ is a finite state sequence $\rho = s_0\dots s_{n+1} \in S^{+}$ such that $s_0 = \init$ and for all $i\in \{0,\dots n\}$, $(s_i, w_i, s_{i+1}) \in \delta$ holds. A run $\rho = s_0\dots s_{n+1}$ is an {\em accepting run} if $s_{n+1} \in F$.
A word $w$ is \emph{accepted} by $\A$ if $\A$ has an accepting run over $w$.

Let $\B$ be an NBA.
Similarly, a run of $\B$ over an infinite word $w = w_0 w_1 \cdots \in \alphabet^{\omega}$ is an infinite sequence $\rho = s_0 s_1 \cdots \in S^{\omega}$ such that $s_0 = \init$ and for all $i \in \N$, $(s_i, w_i, s_{i+1}) \in \delta$.
Let $ \mathit{inf}(\rho) $ denote the set of states that occur infinitely often in run ${\rho}$. 
A run $\rho$ is an {\em accepting run} in $\B$ if $ \mathit{inf}(\rho)\cap F \neq \emptyset $.
An infinite word $w$ is accepted by $\B$ if $\B$ has an accepting run over $w$.

We denote by $\lang{\B}$ (resp. $\lang{\A}$) the set of all words accepted by $\B$ (resp. $\A$).
It is known that NFAs/DFAs recognize exactly \emph{regular} languages while NBAs accept exactly \emph{$\omega$-regular} languages. 
In the remainder of the paper, we denote by $w_i, i \geq 0$ the $i$-th element in the sequence $w$.

\subsection{Semantics of $\ltlf$ and $\ltl$}
\label{app:semantics}

We first give the semantics of $\ltlf$ formulas.
A finite sequence $\rho$ over $2^{\ap}$ is said to satisfy an $\ltlf$ formula $\phi$ over $\ap$, denoted by $\rho\models \phi$, if $\rho, 0 \models \phi$ where for all positions $0 \leq i < \size{\rho}$, $\rho, i\models \phi$ is defined inductively  on $\phi$ as follows:

\begin{itemize}
    \item $\rho, i \models \ltlftrue$,
    \item $\rho, i \not\models \ltlffalse$,
    \item $\rho, i \models a$ iff $a \in \rho_i$ where $\rho_i$ is the $i$-th element of $\rho$ for all $0\leq i<\size{\rho}$,
    \item $\rho, i \models \neg \phi$ iff $\rho, i \not\models \phi$,
    \item $\rho, i \models \phi_{1} \land \phi_{2}$ iff $\rho, i \models \phi_{1}$ and $\rho, i \models \phi_{2}$,
    \item $\rho, i \models \phi_{1} \lor \phi_{2}$ iff $\rho, i \models \phi_{1}$ or $\rho, i \models \phi_{2}$,
    \item $\rho, i \models \ltlfX \phi$ iff $i + 1 < \size{\rho}$ and $\rho, i + 1 \models \phi$,
    \item $\rho, i \models \phi_{1} \ltlfU \phi_{2}$ iff there exists $j$ s.t. $i \leq j < \size{\rho}$ and $\rho, j \models \phi_{2}$, and for all $k$, $i \leq k < j$, we have $\rho, k \models \phi_{1}$,
    \item $\rho, i \models \ltlfF \phi$ iff there exists $j$ s.t. $i \leq j < \size{\rho}$ and $\rho, j \models \phi$,
    \item $\rho, i \models \ltlfG \phi$ iff for all $j$ s.t. $i \leq j < \size{\rho}$, $\rho, j \models \phi$.
\end{itemize}

To obtain the semantics of $\ltl$ formulas, $\rho$ must be an infinite sequence.
Thus, the length of $\rho$, denoted as $\size{\rho}$, is $\infty$.
It actually means that we can just drop all restrictions that the integers need to be less than $\size{\rho}$ meant for $\ltlf$ semantics
We use a subscript $\infty$ for all $\ltl$ modalities to distinguish with their $\ltlf$ counterparts. 
For all positions $ i \geq 0$, $\rho, i\models \phi$ is defined inductively on $\phi$ as follows:
\begin{itemize}
    \item $\rho, i \models \ltlftrue$,
    \item $\rho, i \not\models \ltlffalse$,
    \item $\rho, i \models a$ iff $a \in \rho_i$ where $\rho_i$ is the $i$-th element of $\rho$ for all $i \geq 0$,
    \item $\rho, i \models \neg \phi$ iff $\rho, i \not\models \phi$,
    \item $\rho, i \models \phi_{1} \land \phi_{2}$ iff $\rho, i \models \phi_{1}$ and $\rho, i \models \phi_{2}$,
    \item $\rho, i \models \phi_{1} \lor \phi_{2}$ iff $\rho, i \models \phi_{1}$ or $\rho, i \models \phi_{2}$,
    \item $\rho, i \models \ltlfX_{\infty} \phi$ iff $\rho, i + 1 \models \phi$,
    \item $\rho, i \models \phi_{1} \ltlU \phi_{2}$ iff there exists $j$ s.t. $j \geq i$ and $\rho, j \models \phi_{2}$, and for all $k$, $i \leq k < j$, we have $\rho, k \models \phi_{1}$,
    \item $\rho, i \models \ltlF \phi$ iff there exists $j$ s.t. $j \geq i$ and $\rho, j \models \phi$,
    \item $\rho, i \models \ltlG \phi$ iff for all $j\geq i$ s.t. $j \geq i$, $\rho, j \models \phi$.
\end{itemize}

\subsection{Proof of Theorem~\ref{thrm:PropertiesPref}}

\noindent
\textbf{Theorem~\ref{thrm:PropertiesPref}.}
For $\ltlf$ formula $\phi$, let $\pref(\phi)$ be as defined above. Then,
\begin{enumerate}
    \item{\label{thrm:propertiesPref:safety}} $\pref(\phi)$ is a safety language.
    \item {\label{thrm:propertiesPref:omega}} $\pref(\phi)$ is $\omega$-regular. NBA representing $\pref(\phi)$ consists of $2^{2^{\mathcal{O}(|\phi|)}}$ states.
\end{enumerate}

\paragraph{Proof of ~\autoref{thrm:PropertiesPref}-~\ref{thrm:propertiesPref:safety}.}  

A language $ L \subseteq \Sigma^\omega$ is a {\em safety language} if for every word $w \notin L$ there exists a finite-prefix $u$ of $w$ such for all $y \in \Sigma^\omega$ the word $u\cdot y \notin L$. Such prefixes are referred to as {\em bad prefix}.  


Consider $w \in \Sigma^\omega$ such that $w \notin \pref(\phi)$. 
By definition of $\pref(\phi)$, there exists an $n>0$ s.t. the finite-prefix $w[0,n]\models \neg \phi$. Clearly,  every  infinite extensions of $w[0,n]$ will also not be contained in $\pref(\phi)$, i.e. for all $y \in \Sigma^\omega$, 
 $w[0,n]\cdot y \notin \pref(\phi)$. Hence, $\pref(\phi)$  is a safety language. 
 
\paragraph{Proof of ~\autoref{thrm:PropertiesPref}-~\ref{thrm:propertiesPref:omega}.}  
Given $\ltlf$ formula $\phi$, the NBA for $\pref(\phi)$ can be constructed as follows:
\begin{enumerate}
    \item Construct a DFA $D = (\alphabet, Q, \init, \delta, F)$ for $\neg\phi$, i.e.,  $\mathcal{L}(D) = \mathcal{L}(\neg\phi)$.
    
    We require $D$ to be \emph{complete} in the sense that for every state $s$ and every alphabet $a \in \alphabet$, there exists a successor $t = \delta(s, a)$.
    
    \item Obtain a DBA $C = (\alphabet, Q, \init, \delta', F)$ by converting all accepting states $F$ of $D$ to accepting sink states in $C$. For this, replace all outgoing transitions from all accepting states in $D$ with self loops on all letters.
    
    Formally, replace every $\delta(f,a) = t$ in DFA $D$ with $f = \delta'(f, a)$ in DBA $C$, for all $f\in F$ and $a \in \alphabet$. For all other states, let $\delta'$ behaves identically to $\delta$. 
    
    \item Obtain the desired NBA $B = (\alphabet, Q, \init, \delta',  \Final = Q\setminus F)$ by swapping accepting and non-accepting states of $C$.
\end{enumerate}
Since $C$ is a DBA with all accepting states as sink states, swapping accepting and non-accepting states results in its complementation. Hence, it is sufficient to show that $\mathcal{L}(C)$ accepts the complement of $\pref(\phi)$. In other words, $C$ accepts $w \in \Sigma^\omega$ iff there exists a finite-prefix of $w$ that satisfies $\neg\phi$. 
Clearly,  $w \in \lang{C}$ then $w$ must have a finite-prefix satisfying $\neg\phi$ since the accepting states of $C$ and $D$ are identical and all but outgoing transitions from accepting states are retained.
Conversely, let $w\in\Sigma^\omega$ such that it contains a finite prefix that satisfies $\neg\phi$. 
Despite $\delta$ and $\delta'$ being different, we need to show that $w$ is accepted. Let $v$ be the shortest prefix of $w$ satisfying $\neg\phi$. Since $D$ is a DFA, $v$ has a unique run in $D$. This run also appears in $C$ because all transitions appearing in this run in $D$ are retained in $C$ as none of them are outgoing transitions from accepting states (if it weren't so, then $v$ would not have been the shortest prefix of $w$ that satisfies $\neg\phi$).
Further, since accepting states in $C$ are sink states, $w \in \lang{C}$.
Finally, the number of states of $C$ are bounded by those of $D$ which is doubly exponential in $|\phi|$~\cite{de2013linear}. 
\qed

\subsection{Proof of Lemma~\ref{lemma:prefixlangauge}}

{\textbf Lemma~\ref{lemma:prefixlangauge}
Let $L_n$ and $F_n$ be as defined above. Then $$L_n \uplus \{0,1,\#\}^\omega = \pref(F_n \uplus \{0,1,\#\}^*).$$

\begin{proof}
    First, we show that $L_n \uplus \{0,1,\#\}^\omega \subseteq \pref(F_n \uplus \setnocond{0,1,\#}^{*})$. Trivially, all prefixes of words in $\{0,1,\#\}^\omega$  are contained in $\setnocond{0,1,\#}^*$ since ``$\&$" does not appear in any of them. It remains to show that $L_n \subseteq \pref(F_n \uplus \setnocond{0,1,\#}^{*})$. Let $u\cdot \& \cdot v \in L_n$. We establish that all prefixes of 
    $u\cdot \& \cdot v$ are contained in $F_n \uplus \setnocond{0,1,\#}$. We perform case analysis of prefixes:
    \begin{enumerate}
        \item When the prefix is a prefix of $u$. These prefixes are contained in $\setnocond{0,1,\#}^*$ since ``$\&$" does not appear in the prefix. 
        \item When prefix of is of the form $u\cdot\&$. Now, $u\cdot\& \in F_n$ since it contains exactly one ``$\&$" and the end of $u\cdot\&$ is not in the form $\#w\#$ for $w \in \{0,1\}^n$.
        \item When prefix is of the form $u\cdot \& \cdot y$ but $y$ does not end in $\# w \#$ for $w \in \{0,1\}^n$. For the same reason as above, $u\cdot \& \cdot y \in F_n$.
        \item When prefix is of the form $u\cdot \& \cdot y$ and $y$ ends in $\# w \#$ for $w \in \{0,1\}^n$. Since $u\cdot \& \cdot v \in L_n$, we know that every $\#w\#$ appearing in $v$ ``$\&$" must have appeared in $u$, for $w \in \{0,1\}^n$. Since $y$ is a prefix of $v$ and  $\#w\#$ is at the end of $y$, we get that  $\#w\#$ must have also appeared in $u$. Hence,   $u\cdot \& \cdot y \in F_n$.
    \end{enumerate}
    Hence, $L_n \uplus \{0,1,\#\}^\omega \subseteq \pref(F_n \uplus \setnocond{0,1,\#}^*)$.

Next, we prove  $ \pref(F_n \uplus \setnocond{0,1,\#}^*) \subseteq L_n \uplus \{0,1,\#\}^\omega$. First, observe that for $x \in \pref(F_n \uplus \setnocond{0,1,\#}^*)$, $x$ can contain at most one occurrence of ``$\&$".  By case analysis:
\begin{enumerate}
    \item If $x$ does not contain ``$\&$", then clearly, $x \in \{0,1,\#\}^\omega$.

    \item Otherwise, the word is of the form $u\cdot \& \cdot v$ where $u \in \{0,1,\#\}^*$ and $v \in \{0,1,\#\}^\omega$. Either there are no occurrences of $\#w\#$ in $v$, for $w \in \{0,1\}^n$. In this case, $u\cdot \& \cdot v \in L_n$ vacuously. 
    
    Otherwise, there are occurrences of $\#w\#$ in $v$. Let $u \cdot \& \cdot y$ be an arbitrary prefix of  $u\cdot \& \cdot v$ that ends in $\#w\#$. Since  $u \cdot \& \cdot y \in F_n \uplus \setnocond{0,1,\#}^*$, $u \cdot \& \cdot y \in F_n$. Thus, $\#w\#$ must have appeared in $u$ as well. Finally, since there are only finitely many possibilities of words of the form $\# w \#$, we conclude that every occurrence of $\# w \#$ in $v$ must have also appeared in $u$. Hence, $u\cdot \& \cdot v  \in L_n$.
\end{enumerate}
Hence,  $ \pref(F_n \uplus \setnocond{0,1,\#}^*) \subseteq L_n \uplus \{0,1,\#\}^\omega$.

Therefore $\pref(L_{\psi_n}) = L_n \uplus \setnocond{0,1,\#}^{\omega}$.
\qed
\end{proof}

\subsection{Encoding of (R1) from Theorem~\ref{thrm:prefixdouble}}

\begin{align*}
\mathsf{OnlyOneProp}&:= \ltlfG(0 \rightarrow \neg 1 \land \neg \& \land \neg \#) \land \ltlfG(1 \rightarrow \neg 0 \land \neg \& \land \neg \#) \\ 
& \land \ltlfG(\& \rightarrow \neg 0 \land \neg 1 \land \neg \#)  \land \ltlfG(\# \rightarrow \neg 0 \land \neg 1 \land \neg \&) \land \ltlfG(0 \lor 1\lor \& \lor \#).
\end{align*}

\subsection{Proof of ~\autoref{lem:ltlf-f-g}}

\textbf{\autoref{lem:ltlf-f-g}}
    Let $\phi \in \ltlf_{\setminus \setnocond{\ltlfR, \lor}}$ and let $\ltl$ $t(\phi)$ be  as defined above.
    Then, $\lang{t(\phi)} = \pref(\phi)$ and $\mathcal{O}(|\phi|) = \mathcal{O}(|t(\phi)|)$.

\begin{proof}
    Trivially, $\mathcal{O}(|\phi|) = \mathcal{O}(|t(\phi|)$ holds. 
    We prove that $\lang{t(\phi)} = \pref(\phi)$ by structural induction on $\phi$.
    Let $w = w_0 w_1 \cdots\in \alphabet^{\omega}$ where $w_i$ is the $i$-th letter in $w$.
    Recall, $w[0,n]$ denotes the subsequence $w_0 \cdots w_{n-1}$ of $w$ for $n > 0$. Then
    \begin{itemize}
        \item $\phi = \ell$ (resp. $\phi = \neg \ell$).
        By definition, $t(\phi) = \ell$. It is trivial that $w \in \pref(\phi) = \pref(\ell)$ iff $w \in \lang{\ell}$ since either $w_0 \models \ell$ or $w_0 \not\models \ell$.
        
        \item $\phi = \psi_1 \land \psi_2$.
        Then $t(\phi) = t(\psi_1) \land t(\psi_2)$.
        Assume that $w \in \lang{t(\psi_1) \land t(\psi_2)}$.
        By LTL semantics, $w \in \lang{t(\psi_1)}$ and $w \in \lang{t(\psi_2)}$.
        It follows that $w \in \pref(\psi_1)$ and $w \in \pref(\psi_2)$, based on induction assumption.
        It means that for all $ i> 0$, $w[0,i] \models \psi_1$ and $w[0,i] \models \psi_2$.
        Thus, $w[0,i] \models \psi_1 \land \psi_2$ for all $i > 0$.
        We then have that $w \in \pref(\phi)$.

        Assume that $w \in \pref(\phi) = \pref(\psi_1 \land \psi_2)$.
        It follows that for all $i > 0$, $w[0,i] \models \psi_1 \land \psi_2$, i.e., $w \in \pref(\psi_1)$ and $w\in \pref(\psi_2)$.
        By induction assumption, we have that $w \in \lang{t(\psi_1)}$ and $w \in \lang{t(\psi_2)}$.
        Consequently, $w \models t(\psi_1) \land t(\psi_2)$, i.e., $w \in \lang{t(\psi_1) \land t(\psi_2)}$.
        
        \item $\phi = \ltlfF \psi$. 
        Then $t(\phi) = t(\psi)$.
        By induction assumption, we have that $w \in \pref(\psi)$ iff $w \in \lang{t(\psi)}$.

        Assume that $w \in \lang{t(\phi)} = \lang{t(\psi)}$, i.e., $w \in \pref(\psi)$.
        It follows that for every $i > 0$, $w[0,i] \models \psi$.
        Obviously, for every $i > 0$, $w[0, i] \models \ltlfF \psi$.
        Consequently, $w \in \pref(\phi)$.
        
        Assume that $w \in \pref(\phi)$.
        Then for every $i> 0$, $w[0,i] \models \phi = \ltlfF \psi$.
        By semantics of $\ltlf$, $w[0,1] \models \psi$, i.e., $w_0 \models \psi$.
        It follows that for every $i > 0$, we also have that $w[0, i] \models  \psi$, indicating that $w \in \pref(\psi)$.
        By induction assumption, $w \in \lang{t(\psi)} = \lang{t(\phi)}$.
        So we are done for this case.

        \item $\phi = \psi_1 \ltlfU \psi_2$.
        Then $t(\phi) = t(\psi_2)$.
        The proof is quite similar to the one for $\ltlfF \psi$.
    By induction assumption, we have that $w \in \pref(\psi_2)$ iff $w \in \lang{t(\psi_2)}$.
        
        Assume that $w \in \lang{t(\phi)} = \lang{t(\psi_2)}$, i.e., $w \in \pref(\psi_2)$.
        It follows that for every $i > 0$, $w[0,i] \models \psi_2$.
        Obviously, for every $i > 0$, $w[0, i] \models \psi_1 \ltlfU \psi_2$ since $w[0,1] \models \psi_2$.
        Consequently, $w \in \pref(\phi)$.
        
        Assume that $w \in \pref(\phi)$.
        Then for every $i> 0$, $w[0,i] \models \phi = \psi_1 \ltlfU \psi_2$.
        By semantics of $\ltlf$, $w[0,1] \models \psi_2$, i.e., $w_0 \models \psi_2$.
        It follows that for every $i > 0$, we also have that $w[0, i] \models  \psi_2$, indicating that $w \in \pref(\psi_2)$.
        By induction assumption, $w \in \lang{t(\psi_2)} = \lang{t(\phi)}$.
        So we are done for this case.
        
        \item $\phi = \ltlfG \psi$.
        Then, $t(\phi) = \ltlG (t(\psi))$.
        By induction assumption, we have that we have that $w \in \pref(\psi)$ iff $w \in \lang{t(\psi)}$.

        Assume that $w \in \lang{t(\phi)} = \lang{\ltlG (t(\psi))}$.
        By semantics of $\ltl$, for every $i \geq 0, w[i, \infty] \in \lang{t(\psi)}$.
        In other words, we have that $w[i, \infty] \in \pref(\psi)$ for all $i\geq 0$.
        It follows that $w[i,i+1] \models \psi$ for all $i \geq 0$, according to definition of $\pref$ languages.
        Then we have that $w[0,i] \models \ltlfG \psi$ for all $i \geq 0$ in $\ltlf$ semantics.
        Obviously, $w \in \pref(\ltlfG \psi)$.

        Assume that $w \in \pref(\ltlfG \psi)$.
        By definition of $\pref$ languages, we have that $w[0, i] \models \ltlfG \psi$ for all $i > 0$.
        By semantics of $\ltlf$, we have $w_i \models \psi$ for all $i \geq 0$ (The last position of the word needs to satisfy $\psi$);
        Also, $w[i, j] \models \psi$ for all $j > i$.
        By definition of $\pref$ languages, we have that $w[i, \infty] \in \pref(\psi)$ for all $i \geq 0$.
        Based on induction assumption, we have $w[i, \infty] \in \lang{t(\psi)}$ for all $i \geq 0$.
        It follows that $w \models \ltlG {t(\psi)}$.
        Thus, we have done for this case. 
        
        \item $\phi = \ltlfN \psi$. Then $t(\phi) = \ltlfX_{\infty} t(\psi)$.
        By induction assumption, $w \in \pref(\psi)$ iff $w \in \lang{t(\psi)}$.
        
        Assume that $w \in \pref(\phi)$. Then $w[0, i] \models \ltlfN \psi$ for all $i > 0$, including $i = 2$. By $\ltlf$ semantics, it follows that $w[1,2] \models \psi$ since $w_0$ is not the last position when $i = 2$.
        It follows that we have that $w[1, i] \models \psi$ for all $i > 1$.
        So, $w[1, \infty] \in \pref(\psi)$, i.e., $w[1,\infty] \in \lang{t(\psi)}$ based on induction assumption.
        Then we have $w \models \ltlfX_{\infty} t(\psi)$, i.e., $w \in \lang{\ltlfX_{\infty} t(\psi)}$.

        Assume that $w \in \lang{\ltlfX_{\infty} t(\psi)}$. Then $w[1, \infty] \models t(\psi)$.
        By induction assumption, we have $w[1, \infty] \in \pref(\psi) $.
        It follows that $w[1, 2] \models \psi$ by definition of $\pref$ languages.
        Then $w[0,2] \models \ltlfN \psi$.
        Clearly, we have $w[0, i] \models \ltlfN \psi$ for all $i > 0$, including when $i = 1$.
        Consequently, we have $w \in \pref(\phi).$

        \item $\phi = \ltlfX \psi$.
        Then $t(\phi) = \ltlffalse$.
        It is impossible for a word $w \in \pref(\ltlfX \psi)$ to hold since $w[0,1] \not\models \ltlfX \psi$ as there is no next position at position $0$.
        Therefore, $\lang{\pref(\phi)} = \lang{\ltlffalse} = \emptyset$ since there are no words satisfying $\ltlffalse$.
        \end{itemize}
    \qed
\end{proof}

\subsection{Missing details from Section~\ref{sec:lowerbound}}

\subsubsection{Consistency conditions (C1) and (C2)}
We present the encoding of the first two consistency conditions (C1) and (C2). Recall, we require the following two:
\begin{enumerate}
    \item [(C1).] Consistency within a configuration (the binary encoding of each cell's position is correct)
    \item [(C2).] Consistency with the input word (the first configuration is correct)
\end{enumerate}

Condition (C1) only needs to reason about adjacent cells in the same configuration. If $(b_1, \ldots, b_{cn})$ and $(b'_1, \ldots, b'_{cn})$ are the binary encodings of the positions of two adjacent cells, and $\mathit{Succ}(b_1, \ldots, b_{cn}, b'_1, \ldots, b'_{cn})$ is a propositional formula capturing that $(b'_1, \ldots, b'_{cn})$ encodes the successor (mod $2^{cn}$) of $(b_1, \ldots, b_{cn})$ (see below for details), then the formula
\begin{equation*}
    \ltlfG\big((part_0 \land \ltlfX^{2cn+1}\,true) \rightarrow \mathit{Succ}(\ltlfX^1\,bit, \ldots, \ltlfX^{cn}\,bit, \ltlfX^{cn + 2}\,bit, \ldots, \ltlfX^{2cn + 1}\,bit)\big)
\end{equation*}
expresses that if we start at the beginning of the encoding of a cell ($part_0$) and the prefix is long enough to include the entirety of the successor cell ($\ltlfX^{2cn + 1}\,\ltlftrue$), then $\mathit{Succ}$ holds between the encodings of the two cells (note that $b_i$ is given by $\ltlfX^{i}\,bit$ and $b'_i$ is given by $\ltlfX^{cn + 1 + i}\,bit$). Similarly, the formula $\ltlfX^{cn}\,\ltlftrue \rightarrow \bigwedge^{cn}_{i=1} \ltlfX^i \neg bit$ expresses that the encoding of the first cell's position is $0$.

Condition (C2) only requires looking at the $n$ cell contents that should contain the input word in the first configuration, plus ensuring that all other cells on the tape are blank. Checking the cells that should contain the input word can be expressed by the formula
$$ \ltlfX^{(cn + 1)n}\,\ltlftrue \rightarrow \Big(\bigwedge^n_{i=1} \ltlfX^{(cn + 1)i}\,cell_{x_i}\Big) $$
meaning that if the prefix is long enough to cover all $n$ cells ($\ltlfX^{(cn + 1)n}\,\ltlftrue$), then the content of the $i$-th cell is $x_i$ ($\ltlfX^{(cn + 1)i}\,cell_{x_i}$), for all $i$ from $1$ to $n$. Ensuring that all other cells are blank can likewise be expressed by a formula of polynomial size (see below for details).

\paragraph{Consistency within a configuration.}

As explained above, the first consistency condition can be represented by a conjunction of the formula $\ltlfX^{cn}\,\ltlftrue \rightarrow \bigwedge^{cn}_{i=1} \ltlfX^i \neg bit$, which expresses that the encoding of the first cell's position is $0$, and the formula $\ltlfG((part_0 \land \ltlfX^{2cn+1}\,\ltlftrue) \rightarrow \mathit{Succ}(\ltlfX^1\,bit, \ldots, \ltlfX^{cn}\,bit, \ltlfX^{cn + 2}\,bit, \ldots, \ltlfX^{2cn + 1}\,bit))$, which expresses that the encoded position of each successive cell is the successor of the previous one. The propositional formula $\mathit{Succ}$ can be defined as
$$\mathit{Succ}(b_1, \ldots, b_{cn}, b'_1, \ldots, b'_{cn}) = (b'_1 \leftrightarrow \neg b_1) \land \bigwedge^{cn}_{i=2} (b'_i \leftrightarrow (b_i \oplus (b_{i-1} \land \neg b'_{i-1})))$$
which expresses the successor relation between two binary numbers $b_{cn}\ldots b_1$ and $b'_{cn}\ldots b'_1$ (note that we consider $b_1$ the least significant digit). The subformula $(b'_1 \leftrightarrow \neg b_1)$ expresses that the least significant digit is flipped, while $(b'_i \leftrightarrow (b_i \oplus (b_{i-1} \land \neg b'_{i-1})))$ (where $\oplus$ is the exclusive-or operator) expresses that the $i$-th digit is flipped if there is a carry (which only happens when the $(i-1)$-th digit has flipped from $1$ to $0$).

\paragraph{Consistency with the input word.}

The second consistency condition is composed of two formulas. As explained above, the formula $\ltlfX^{(cn + 1)n}\,\ltlftrue \rightarrow \left(\bigwedge^n_{i=1} \ltlfX^{(cn + 1)i}\,cell_{x_i}\right)$ expresses that the first $n$ cells of the first configuration contain the input word $x = x_1 \ldots x_n$. The second formula ensures that all other cells are blank, and can be expressed by
$$\ltlfX^{(cn + 1)(n + 1)}\,\ltlftrue \rightarrow \ltlfX^{(cn + 1)(n + 1)}\Big((part_0 \rightarrow cell_{\emptyset})\,\text{W}\,\big(part_0 \land \bigwedge^{cn}_{i = 1} \ltlfX^i\neg bit \big)\Big)$$
meaning that if the prefix is long enough to reach the $(n+1)$-th cell ($\ltlfX^{(cn + 1)(n + 1)}\,\ltlftrue$), the contents of every cell from this point on must be blank ($part_0 \rightarrow cell_{\emptyset}$) until we reach a new configuration, indicated by the encoding of the cell position resetting back to $0$ ($part_0 \land \bigwedge^{cn}_{i = 1} \ltlfX^i\neg bit$). Note that the ``zeroth'' cell (the cell where the head starts, immediately before the input word) is also blank, but this is enforced by the transition relation of $M$.

\subsubsection{Missing formulas for (C3) }

\begin{itemize}
    \item $\varphi_\delta$ expresses that if the head is on $c$ ($cell_{(q, \gamma)}$), then in $c_{Last}$ the head must have moved to a different cell and written the appropriate symbol $\gamma'$ given by the transition relation ($L^{-cn}\,cell_{\gamma'}$)
    \[
        \varphi_\delta \equiv \bigwedge_{q \in Q} \bigwedge_{\gamma \in \Gamma} \big(cell_{(q, \gamma)} \rightarrow L^{-cn}\,cell_{\gamma'}\big)
        \tag*{where $\delta(q, \gamma) = (q', \gamma', d)$}
    \]
    
    \item $\varphi^{\leftarrow}_\delta$ expresses that if the head is on the cell to the \emph{right} of $c$ ($\ltlfX^{cn + 1}\,cell_{(q, \gamma_2)}$), and the transition relation requires it to move left, then in the next configuration the head must have moved to $c_{Last}$ ($L^{-cn}\,cell_{(q', \gamma_1)})$)

    \[
    \varphi^{\leftarrow}_\delta \equiv \bigwedge_{q \in Q} \bigwedge_{\gamma_1 \in \Gamma} \bigwedge_{\gamma_2 \in \Gamma}\Big(\big(cell_{\gamma_1} \land \ltlfX^{cn + 1}\,cell_{(q, \gamma_2)}\big) \rightarrow L^{-cn}\,cell_{(q', \gamma_1)}\Big) \tag*{where $\delta(q, \gamma_2) = (q', \gamma', \leftarrow)$} 
    \]
    
    \item $\varphi^{\rightarrow}_\delta$ expresses that if the head is on the cell to the \emph{left} of $c$ ($cell_{(q, \gamma_1)}$), and the transition relation requires it to move right, then in the next configuration the head must have moved to $c_{Last}$ ($L^{-cn}\,cell_{(q', \gamma_2)})$)

    \[\varphi^{\rightarrow}_\delta \equiv \bigwedge_{q \in Q} \bigwedge_{\gamma_1 \in \Gamma} \bigwedge_{\gamma_2 \in \Gamma}\Big(\big(cell_{(q, \gamma_1)} \land \ltlfX^{cn + 1}\,cell_{\gamma_2}\big) \rightarrow L^{-cn}\,cell_{(q', \gamma_2)}\Big) \tag*{where $\delta(q, \gamma_1) = (q', \gamma', \rightarrow)$} 
    \]
    \item Finally, $\varphi^0_\delta$ expresses that if the head is neither on $c$ nor on the cells adjacent to it ($cell_{\gamma_1} \land \ltlfX^{cn + 1}\,cell_{\gamma_2} \land \ltlfX^{2(cn + 1)}\,cell_{\gamma_3}$), then the contents of the cell don't change ($L^{-cn}\,cell_{\gamma_2}$)
    \[
    \varphi^0_\delta \equiv \bigwedge_{\gamma_1 \in \Gamma} \bigwedge_{\gamma_2 \in \Gamma} \bigwedge_{\gamma_3 \in \Gamma} \Big(\big(cell_{\gamma_1} \land \ltlfX^{cn + 1}\,cell_{\gamma_2} \land \ltlfX^{2(cn + 1)}\,cell_{\gamma_3}\big) \rightarrow L^{-cn}\,cell_{\gamma_2}\Big)
    \]
\end{itemize}

Note that in the latter two formulas $c$ is the cell to the right of the current cell ($\ltlfX^{cn+1}\,\mathsf{MatchLastCell}$) this is necessary so that $\varphi^{\rightarrow}_\delta$ and $\varphi^0_\delta$ can refer to the cell to the left of $c$.


\subsection{NBA with at least $2^{2^{n}}$ states}

Let $n \in \N$ and $\Sigma = \{0,1,\#, \&\}$. 
Consider the language $L_n \subseteq \Sigma^ \omega$ where
\[u\cdot \& \cdot v \in L_n \text{ s.t. if }   \#w\#  \text{ appears in } u \text{ then } \#w\# \text{ also appears in } v,\]
where $w\in\{0,1\}^n$, $u \in \{0,1,\#\}^*$ and $v \in \{0,1,\#\}^\omega$.
Essentially, $L_n$ is  a bit-level adaption of the language $K_d$ where $x \cdot \& \cdot y \in K_D$ if digits appearing in $x$ are a subset of digits appearing in $y$, where $x \in D^*$ and  $y \in D^\omega$ for $D = \{0,1,\cdots, d-1\}$. We show that all NBA of $K_d$  consists of at least $2^D$ states. This proof can easily be adapted to show that all NBA of $L_n$ consists of $2^{2^{\Omega(n)}}$ states.

First, note that $K_d$ is a safety $\omega$-regular language. Let $C_d$ be an (non-deterministic B\"uchi) automaton representing $K_d$. Then, $C_d$ can be trimmed by removing all states that are unreachable from the initial state and at least one accepting state. Next, all states of the trimmed automaton can be converted to accepting states. Let us denote this automaton by $A_d$. Clearly, $L(A_d) = L(C_d)$ and $A_d$ has fewer states than $C_d$.

We claim that $C_d$ must have at least $2^d$ states. Suppose there are fewer than $2^d$ states.
We will use the notation $x_S$ and $y_T$ to denote  finite and infinite words over the digits $D$ s.t. $S$ and $T$ denote the set of digits appearing in $ x_S$ and $y_T$ respectively. For a state $Q$ in $A_d$ with outgoing transitions on $\&$, let $\& \cdot y_{T_1}, \dots \& \cdot y_{T_p}$ be all the infinite words with paths starting in $Q$. Since all paths are accepting ($A_d$ is a safety automaton), all finite words to $Q$ must be of the form $x_S$ where $S  \subseteq T$ and $T = \bigcap_{i=1}^p T_i$.  Now, consider a word  $x_T\& y_T$. We claim that all its accepting paths must pass through states of the form $Q$. Suppose $x_T$ has a path to a state $Q'$ with an outgoing transition on $\&$. Similar to $T$ for $Q$, let $T'$ be defined for $Q'$. We assume $T' \neq T$. Clearly, $T \subseteq T'$, since otherwise it would accept a word $x_T\& y_S$ where $T \nsubseteq S$. Furthermore, $T' \subseteq T$ since otherwise $\& \cdot y_T$ will not have a path from $Q'$. Hence, $T = T'$. Hence, for every $S\subseteq D$, $A_d$ must have at least one unique state to accept words of the form $x_S\& y_S$. Thus, $A_d$ must have at least $2^D$ states. Subsequently, all automata $C_d$ of the language must contain at least $2^D$ states.

\end{document}